\documentclass[a4paper,11pt]{article}

\usepackage{authblk}

\usepackage{amsmath}
\usepackage{amsthm}
\usepackage{amssymb}

\usepackage{algorithmic}
\usepackage{algorithm}

\usepackage{graphicx}
\usepackage{epsfig}

\usepackage[left=1.0in,top=1.0in,right=1.0in,bottom=1.0in,nohead]{geometry}

\theoremstyle{definition}
\newtheorem{definition}{Definition}
\newtheorem{observation}{Observation}

\newtheorem{theorem}{Theorem}
\newtheorem{corollary}[theorem]{Corollary}
\newtheorem{lemma}[theorem]{Lemma}

\newcommand{\G}{\mathcal G}
\newcommand{\V}{\mathcal V}
\newcommand{\E}{\mathcal E}

\long\def\longdelete#1{}


\title{\bf Approximation Algorithms for the Capacitated Domination Problem\thanks{
This work was supported in part by the National Science Council, Taipei 10622, Taiwan, under
the Grants NSC98-2221-E-001-007-MY3 and NSC98-2221-E-001-008-MY3.}}

\author[ ]{Mong-Jen Kao}
\author[ ]{Han-Lin Chen}

\affil[ ]{Department of Computer Science and Information Engineering, \newline
National Taiwan University, Taiwan.}

\affil[ ]{\newline Emails: d97021@csie.ntu.edu.tw, kalent37@ms89.url.com.tw}

\date{\empty}

\begin{document}

\maketitle


\begin{abstract}
We consider the {\em Capacitated Domination} problem, which models a
service-requirement assignment scenario and is also a
generalization of the well-known {\em Dominating Set} problem. In
this problem, given a graph with three parameters defined on each vertex, namely cost,
capacity, and demand, we want to find
an assignment of demands to vertices of least cost such that the demand of each vertex is
satisfied subject to the capacity constraint of each vertex providing the service.

In terms of polynomial time approximations, we present logarithmic approximation algorithms with respect to different demand assignment models for this problem on general
graphs, which also establishes the corresponding approximation
results to the well-known approximations of the
traditional {\em Dominating Set} problem. Together with our
previous work, this closes the problem of generally approximating
the optimal solution. On the other hand, from the perspective of
parameterization, we prove that this problem is {\it W[1]}-hard when parameterized by a structure of the graph called treewidth. Based on this hardness result, we present exact fixed-parameter tractable algorithms when parameterized by treewidth and maximum capacity of the vertices.
This algorithm is further extended to obtain pseudo-polynomial time approximation schemes for planar graphs.
\end{abstract}


\section{Introduction}

For decades, {\em Dominating Set} problem has been one of the most
fundamental and well-known problems in both 
graph theory and combinatorial optimization. Given a graph $G=(V,E)$ and an integer $k$,
{\em Dominating Set} asks for a subset $D\subseteq V$ whose cardinality
does not exceed $k$ such that every vertex in the graph either
belongs to this set or has a neighbor which does. As this problem is
known to be NP-hard, approximation algorithms have been proposed in
the literature. On one hand, a simple greedy algorithm is shown to
achieve a guaranteed ratio of $O(\ln n)$ \cite{VC79,804034,LL75}, where $n$ is the
number of vertices, which is later proven to be the approximation
threshold by Feige \cite{285059}. On the other hand, algorithms
based on dual-fitting provide a guaranteed ratio of $\Delta$
\cite{DSH82}, where $\Delta$ is the maximum degree of the vertices of the graph.

\smallskip

In addition to polynomial time approximations, {\em Dominating
Set} has its special place from the perspective of parameterized complexity as well \cite{RGDMRF99,JFMG06,RN06}. In
contrast to {\em Vertex Cover}, which is fixed-parameter tractable (FPT),
{\em Dominating Set} has been proven to be {\it W[2]}-complete when
parameterized by solution size, in the sense that no fixed-parameter
algorithm exists (with respect to solution size) unless FPT={\it W[2]}. Though {\em Dominating Set} is a fundamentally hard problem in the parameterized $W$-hierarchy, it has been used as a benchmark problem for developing {\it sub-exponential time} parameterized algorithms \cite{JAHLBHFTKRN02,1101823,1139978} and 
{\it linear size kernels} have been obtained in planar graphs \cite{990309,JFMG06,DBLP:conf/icalp/GuoN07,RN06}, and more generally, in graphs that exclude a fixed graph $H$ as a minor.

\smallskip

Besides {\em Dominating Set} problem itself, a vast body of
work has been proposed in the literature, considering possible variations from purely theoretical aspects to practical applications. See \cite{TWHSHPS98,FSR78} for a detailed survey. In particular, variations of {\em Dominating Set} problem occur in numerous practical settings, ranging from strategic decisions, such as locating radar stations or emergency services, to computational biology and to voting systems. For example, Haynes et al. \cite{587927} considered {\em Power Domination Problem} in electric networks \cite{587927,DBLP:conf/cocoon/LiaoL05} while Wan et al. \cite{PJWKMAOF03} considered {\em Connected Domination Problem} in wireless ad hoc networks.

\smallskip

Motivated by a general service-requirement assignment model, Kao et al., \cite{MJKCSLDTL09} considered a generalized domination problem called {\em Capacitated Domination}. In this problem, the input graph is given with tri-weighted vertices, referred to as {\em cost}, {\em capacity}, and {\em demand}, respectively. The demand of a vertex stands for the amount of service it requires from its adjacent vertices (including itself) while the capacity of a vertex represents the amount of service it can provide when it's selected as a server. The goal of this problem is to find a dominating multi-set as well as a demand assignment function such that the overall cost of the multi-set is minimized. For different underlying applications, there are two different demand assignment models, namely {\em splittable} demand model and {\em unsplittable} demand model, depending on whether or not the demand of a vertex is allowed to be served by different vertices. Moreover, there has been work studying the variation when the number of copies, or {\it multiplicity} of each vertex in the dominating multi-set, is limited, referred to as {\em hard} capacity, and as {\em soft} capacity when no such limit is specified. Kao et al., \cite{MJKCSLDTL09} considered the soft capacitated domination problem with splittable demand and provided a $(\Delta+1)$-approximation for general graphs, where $\Delta$ is the degree of the graph. For special graph classes, they proved that even when the input graph is restricted to a tree, the soft capacitated domination problem with splittable demand remains NP-hard, for which they also presented a polynomial time approximation scheme. Dom et al., \cite{DBLP:conf/iwpec/DomLSV08} considered the hard capacitated domination problem with uniform demand and showed that this problem is {\it W[1]}-hard even when parameterized by treewidth and solution size.

\smallskip


In this paper, we consider the (soft) {\em Capacitated Domination} problem and present logarithmic approximation algorithms with respect to different demand assignment models on general graphs. Specifically, we provide a $(\ln n)$-approximation
for weighted unsplittable demand model, a $(4\ln n+2)$-approximation for weighted splittable demand model, and a $(2\ln n+1)$-approximation for unweighted splittable demand model, where $n$ is the number of vertices. Together with the $(\Delta+1)$-approximation result given by Kao et al., \cite{MJKCSLDTL09}, this establishes a corresponding near-optimal approximation result to the original {\em Dominating Set} problem. Although the result may look natural, the greedy choice we make is not obvious when non-uniform capacity as well as non-uniform demand is taken into consideration. On the other hand, from the perspective of parameterization, we prove that this problem is {\it W[1]}-hard when parameterized by a structure of the graph, called the treewidth, and present exact FPT algorithms when parameterized by the treewidth and the maximum capacity of the vertices. 
This algorithm is further extended to obtain pseudo-polynomial time approximation schemes for planar graphs, based on a framework due to Baker \cite{174650}.

\smallskip

The rest of this paper is organized as follows. In Section~\ref{preliminary}, we give formal definitions and notation adopted in the paper. In Section~\ref{logarithmic_approximations}, we present our ideas and algorithms that achieve the aforementioned approximation guarantees. We present the parameterized results in Section~\ref{parameterized_results} and conclude by listing some future work in Section~\ref{conclusion}.


\section{Preliminary} \label{preliminary}

We assume that all the graphs considered in this paper are simple and undirected. Let $G=(V,E)$ be a graph with vertex set $V$ and edge set $E$. A vertex $v \in V$ is said to be adjacent to a vertex $u \in V$ if $(u,v) \in E$. The set of neighbors of a vertex $v \in V$ is denoted by $N_G(v) = \{u:(u,v) \in E\}$. The closed neighborhood of $v \in V$ is denoted by $N_G[v] = N_G(v) \cup \{v\}$.  The subscript $G$ in $N_G[v]$ will be omitted when there is no confusion.

\smallskip

Consider a graph $G=(V,E)$ with tri-weighted vertices, referred to as the cost, the capacity, and the demand of each vertex $u \in V$, denoted by $w(u)$, $c(u)$, and $d(u)$, respectively. Let $D$ denote a multi-set of vertices of $V$ and for any vertex $u \in V$, let $x_D(u)$ denote the {\it multiplicity} of $u$ or the number of times of $u$ in $D$.
The cost of $D$, denoted $w(D)$, is defined to be $w(D)= \sum_{u\in D} w(u)\cdot x_D(u)$.

\begin{definition}[Capacitated Dominating Set]
A vertex multi-subset $D$ is said to be a feasible capacitated dominating set with respect to a demand assignment function $f$ if the following conditions hold.
\begin{itemize}
    \item {\bf Demand constraint:} $\sum_{u \in N_G[v]}f(v,u) \ge d(v)$, for each $v \in V$.
    \item {\bf Capacity constraint:} $\sum_{u \in N_G[v]}f(u,v) \le c(v)\cdot x_D(v)$, for each $v \in V$.
\end{itemize}
\end{definition}

Given a problem instance, the capacitated domination problem asks for a capacitated dominating multi-set $D$ and demand assignment function $f$ such that $w(D)$ is minimized. For unsplittable demand model we require that $f(u,v)$ is either $0$ or $d(u)$ for each edge $(u,v) \in E$. Note that since it is already NP-hard\footnote{This can be verified by making a reduction from {\sc Subset Sum}.} to compute a feasible demand assignment function from a given feasible capacitated dominating multi-set when the demand cannot be split, it is natural to require the demand assignment function be specified, in addition to the optimal vertex multi-set itself.

\smallskip

Parameterized complexity is a well-developed framework for studying the computationally hard problem \cite{RGDMRF99,JFMG06,RN06}. A problem is called {\it fixed-parameter tractable} (FPT) with respect to a parameter $k$ if it can be solved in time $f(k)\cdot n^{O(1)}$, where $f$ is a computable function depending only on $k$. Problems (along with its defining parameters) belonging to {\it W[t]}-hard for any $t \ge 1$ are believed not to admit any FPT algorithms (with respect to the specified parameters). Now we define the notion of parameterized reduction.

\begin{definition}
Let $A$ and $B$ be two parameterized problems. We say that $A$ reduces to $B$ by a standard parameterized reduction if there exists an algorithm $\Phi$ that transforms $(x,k)$ into $(x^\prime, g(k))$ in time $f(k)\cdot \left|x\right|^\alpha$, where $f,g:\mathcal{N} \rightarrow \mathcal{N}$ are arbitrary functions and $\alpha$ is a constant independent of $\left|x\right|$ and $k$, such that $(x,k)\in A$ if and only if $(x^\prime,g(k)) \in B$.
\end{definition}

Next we define the concept of {\em tree decomposition} \cite{HLBAMCAK08,DBLP:books/sp/Kloks94}.

\begin{definition}[Tree Decomposition of a Graph]
A tree decomposition of a graph $G=(V,E)$ is a pair $(X=\left\{X_i:i\in I\right\}, T=(I,F))$ where each node $i\in I$ has associated with it a subset of vertices $X_i \subseteq V$, called the bag of $i$, such that
\begin{enumerate}
    \item{Each vertex belongs to at least one bag: $\bigcup_{i \in I}X_i = V$.}
    \item{For all edges, there is a bag containing both its end-points.}
    \item{For all vertices $v \in V$, the set of nodes $\{i \in I:v \in X_i\}$ induces a subtree of $T$.}
\end{enumerate}
\end{definition}

The width of a tree decomposition is $\max_{i \in I}\left|X_i\right|$. The treewidth of a graph $G$ is the minimum width over all tree decompositions of $G$.

\begin{definition}[Nice Tree Decomposition \cite{DBLP:books/sp/Kloks94}]
A tree decomposition $(X,T)$ is a nice tree decomposition if one can root $T$ in such a way that each node $i \in I$ is of one of the four following types.
\begin{enumerate}
    \item{{\em Leaf}: node $i$ is a leaf of $T$, and $\left|X_i\right| = 1$.}
    \item{{\em Join}: node $i$ has exactly two children, say $j_1$ and $j_2$, and $X_i = X_{j_1} = X_{j_2}$.}
    \item{{\em Introduce}: node $i$ has exactly one child, say $j$, and there is a vertex $v \in V$ such that $X_i = X_j \cup \left\{v\right\}$.}
    \item{{\em Forget}: node $i$ has exactly one child, say $j$, and there is a vertex $v \in V$ such that $X_j = X_i \cup \left\{v\right\}$.}
\end{enumerate}
\end{definition}

Given a tree decomposition of width $k$, a nice tree decomposition of the same width can be found in linear time \cite{DBLP:books/sp/Kloks94}.


\section{Logarithmic Approximation} \label{logarithmic_approximations}

In this section, we present logarithmic approximation algorithms for
capacitated domination problems with respect to different cost and
demand models. Specifically, we provide a $(\ln n)$-approximation
for weighted unsplittable demand model, a $(4\ln n+2)$-approximation for weighted splittable demand model, and a
$(2\ln n+1)$-approximation for unweighted splittable demand model,
where $n$ is the number of vertices.

The main idea is based on greedy approach in the sense that we keep choosing
a vertex with the best efficiency in each iteration until the whole
graph is dominated.
By best efficiency we mean the maximum cost-efficiency ratio defined for 
each vertex in the remaining graph.
We describe the results in more detail in the following
subsections.

\subsection{Weighted Unsplittable Demand}
\label{section_weighted_unsplittable}

In this section, we consider the {\em weighted capacitated
domination problem with unsplittable demand} and provide a simple
greedy algorithm that achieves the approximation guarantee of $\ln
n$.

\smallskip

Let $U$ be the set of vertices which are not dominated yet.
Initially, we have $U = V$. For each vertex $u \in V$, let
$N_{ud}[u] = U \cap N[u]$ be the set of undominated vertices in the closed neighborhood of $u$.
Without loss of generality, we shall assume that
the elements of $N_{ud}[u]$, denoted by $v_{u,1}, v_{u,2}, \ldots, v_{u,{|N_{ud}[u]|}}$, are
sorted in non-decreasing order of their demands in the remaining
section.

\smallskip

In each iteration, the algorithm chooses a vertex of the most efficiency
from $V$, where the efficiency of a vertex, say $u$, is defined by
the largest effective-cost ratio of the number of vertices dominated by $u$ over the total cost. That is, 
$$\max_{1\le i\le
|N_{ud}[u]|}\frac{i}{w(u)\cdot x_u(i)},$$ where $$x_u(i) = \left\lceil\frac{\sum_{1\le j\le i}d(v_{u,j})}{c(u)}\right\rceil$$ is
the number of copies of $u$ selected in order to dominate
$v_{u,1}, v_{u,2}, \ldots,$ and $v_{u,i}$. A high-level description of this algorithm
is presented in Figure~\ref{Algorithm for Weighted Unsplittable Demand}.

\begin{figure*}[t]
\rule{\linewidth}{0.2mm}
\medskip
{{\sc Algorithm} {\em Unsplit-Log-Approx
}}

\begin{algorithmic}[1]
\STATE $U \longleftarrow V$
\WHILE{$U \ne \phi$}
    \STATE Pick a vertex in $V$ with the most efficiency, say $u$.
    \STATE let $k = \arg \max_{1\le i\le |N_{ud}[u]|}\frac{i}{w(u)\cdot x_u(i)}$.
    \STATE Assign the demand of each vertex in $\{v_{u,1}, v_{u,2}, \ldots, v_{u,k}\}$ to $u$ and remove them from $U$.
\ENDWHILE
\STATE compute from the assignment the weight of the dominating set, and return the result.

\end{algorithmic}
\rule{\linewidth}{0.2mm} 
\caption{The pseudo-code for the 
weighted unsplittable demand model.} \label{Algorithm for
Weighted Unsplittable Demand}
\end{figure*}

In iteration $j$, let $OPT_j$ be the cost of the optimal solution
for the remaining problem instance, which is clearly upper bounded
by the cost, $OPT$, of the optimal solution for the input instance.
Let the number of undominated vertices at the beginning of iteration $j$ be
$n_j$, and the number of vertices that are newly dominated in iteration $j$ be $k_j$.

\smallskip

Denote by $S_j$ the cost in iteration $j$. Note that $S_j = w(u) \cdot x_u(k_j)$, where $u$ is the most efficient vertex chosen
in iteration $j$. Assume that the algorithm repeats for $m$
iterations. 
We have the following lemma.

\begin{lemma} \label{lemma_greedy_weighted_unsplittable}
For each $j$, $1\le j \le m$, we have $S_j\le \frac{k_j}{n_j}\cdot OPT_j$.
\end{lemma}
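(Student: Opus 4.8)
The plan is to reduce the statement to the standard greedy set-cover estimate. Since the greedy rule selects the vertex of maximum efficiency, and the efficiency of the chosen vertex $u$ is exactly $\frac{k_j}{w(u)\cdot x_u(k_j)} = \frac{k_j}{S_j}$, the desired inequality $S_j \le \frac{k_j}{n_j}\cdot OPT_j$ is equivalent to showing that \emph{some} vertex in the remaining graph has efficiency at least $\frac{n_j}{OPT_j}$. Thus it suffices to exhibit a single vertex whose best effective-cost ratio is at least $\frac{n_j}{OPT_j}$; the greedy choice can only do better.

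First I would fix an optimal solution $(D^*, f^*)$ for the remaining instance at iteration $j$, of cost $OPT_j = \sum_{u} w(u)\cdot x_{D^*}(u)$. For each vertex $u$, let $A_u \subseteq N_{ud}[u]$ be the set of still-undominated vertices whose demand $f^*$ routes to $u$ (breaking ties so that each undominated vertex is charged to a single server), and write $a_u = |A_u|$. Then $\sum_u a_u \ge n_j$, because every one of the $n_j$ undominated vertices has its demand served; and the capacity constraint at $u$ gives $\sum_{v\in A_u} d(v) \le c(u)\cdot x_{D^*}(u)$, hence $x_{D^*}(u) \ge \big\lceil \sum_{v\in A_u} d(v)/c(u)\big\rceil$.

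The crucial step --- and the one I expect to be the main obstacle, precisely because demands and capacities are non-uniform --- is to relate $x_{D^*}(u)$ to the quantity $x_u(a_u)$ appearing in the greedy efficiency. Here the non-decreasing sort of $N_{ud}[u]$ by demand does the work: the $a_u$ smallest-demand vertices $v_{u,1},\dots,v_{u,a_u}$ satisfy $\sum_{j\le a_u} d(v_{u,j}) \le \sum_{v\in A_u} d(v)$, since $A_u$ is an $a_u$-element subset of $N_{ud}[u]$; monotonicity of the ceiling then yields $x_u(a_u) \le x_{D^*}(u)$. Consequently the efficiency of $u$, evaluated at index $i = a_u$, is at least $\frac{a_u}{w(u)\cdot x_u(a_u)} \ge \frac{a_u}{w(u)\cdot x_{D^*}(u)}$.

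Finally I would combine these bounds through a mediant (ratio-of-sums) inequality: restricting the sums to the servers actually used by $D^*$, the maximum efficiency over all vertices is at least $\max_u \frac{a_u}{w(u)\cdot x_{D^*}(u)} \ge \frac{\sum_u a_u}{\sum_u w(u)\cdot x_{D^*}(u)} \ge \frac{n_j}{OPT_j}$, where the last step uses $\sum_u a_u \ge n_j$ and $\sum_u w(u)\cdot x_{D^*}(u) = OPT_j$. Since the greedy algorithm selects a vertex achieving the maximum efficiency, its chosen ratio $\frac{k_j}{S_j}$ is at least $\frac{n_j}{OPT_j}$, which rearranges to the claimed bound $S_j \le \frac{k_j}{n_j}\cdot OPT_j$.
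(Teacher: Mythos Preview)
Your argument is correct and follows the same route as the paper: compare the greedy efficiency $\tfrac{k_j}{S_j}$ to the per-server efficiencies $\tfrac{a_u}{w(u)\,x_{D^*}(u)}$ realized in an optimal solution for the residual instance, and then use the ratio-of-sums (mediant) bound to obtain $\tfrac{n_j}{OPT_j}$. The paper's two-line proof merely asserts that the greedy efficiency ``is no less than that of each vertex chosen in $OPT_j$'' and that the latter ``is no less than the average''; your write-up makes explicit the one nontrivial point the paper leaves implicit, namely that the non-decreasing sort of $N_{ud}[u]$ guarantees $x_u(a_u)\le x_{D^*}(u)$, which is exactly why the greedy efficiency dominates each server's realized efficiency in $OPT_j$.
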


\begin{proof}
Since we always choose the vertex with the maximum 
efficiency, the efficiency is no less than that of each vertex
chosen in $OPT_j$, which is no less than the average of $OPT_j$. Therefore we have
$\frac{k_j}{S_j} \ge \frac{n_j}{OPT_j}$ and the lemma follows.
\end{proof}

\smallskip

\begin{theorem} \label{theorem_weighted_unsplittable_logn}
Algorithm {\em Unsplit-Log-Approx} computes a $(\ln n)$-approximation for weighted capacitated domination problem with unsplittable demands in $O(n^3)$ time, where $n$ is the number of vertices.
\end{theorem}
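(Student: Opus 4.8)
The plan is to lift the per-iteration guarantee of Lemma~\ref{lemma_greedy_weighted_unsplittable} to a global bound on the cost of the returned solution, and then to control the accumulated loss by a harmonic sum, exactly as in the classical analysis of greedy set cover. First I would use that $OPT_j \le OPT$ for every $j$ (already noted in the text, since in iteration $j$ the algorithm faces a subinstance whose optimum cannot exceed that of the full instance). Combining this with Lemma~\ref{lemma_greedy_weighted_unsplittable} gives $S_j \le \frac{k_j}{n_j}\,OPT$ for each iteration. The weight the algorithm finally reports is at most $\sum_{j=1}^{m} S_j$ --- a vertex selected in several iterations contributes a number of copies that, by subadditivity of the ceiling in the copy count $x_u(\cdot)$, is no larger than the sum of its per-iteration copy counts --- so that
\[
w(D) \;\le\; \sum_{j=1}^{m} S_j \;\le\; OPT \cdot \sum_{j=1}^{m} \frac{k_j}{n_j}.
\]

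The core estimate is then $\sum_{j=1}^{m}\frac{k_j}{n_j} \le H_n$, where $H_n=\sum_{i=1}^{n} 1/i$. Here I would use that exactly $k_j$ undominated vertices disappear in iteration $j$, so $n_{j+1}=n_j-k_j$ with $n_1=n$, and the sequence $n_1>n_2>\cdots$ decreases to $0$. Bounding each summand termwise,
\[
\frac{k_j}{n_j} \;=\; \sum_{\ell=1}^{k_j}\frac{1}{n_j} \;\le\; \sum_{\ell=1}^{k_j}\frac{1}{n_j-\ell+1} \;=\; \sum_{i=n_{j+1}+1}^{n_j}\frac{1}{i},
\]
and since the index ranges $\{n_{j+1}+1,\dots,n_j\}$ partition $\{1,\dots,n\}$ as $j$ runs over all iterations, the sum telescopes to $H_n$. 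As $H_n\le \ln n+1$, this yields the claimed approximation factor (the stated $\ln n$ being the leading harmonic term).

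For the running time I would presort each adjacency list once in non-decreasing order of demand, at total cost $O(n^2\log n)$. Since each iteration removes at least one vertex, there are at most $n$ iterations. Within an iteration the efficiency of a vertex $u$ is evaluated by a single scan of its presorted list (skipping vertices already removed from $U$), maintaining a running prefix sum of demands to obtain $x_u(i)$ and tracking the maximizing index $i$; this costs $O(\deg(u))$. Summing over all vertices gives $O\!\left(\sum_u \deg(u)\right)=O(|E|)=O(n^2)$ per iteration, hence $O(n^3)$ overall, which dominates the presort.

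The proof itself is largely mechanical once Lemma~\ref{lemma_greedy_weighted_unsplittable} is in hand, so the only genuinely delicate points are twofold: first, the cost accounting when a single vertex is selected across several iterations, which the ceiling-subadditivity observation resolves and which is specific to the capacitated (multiset) setting rather than ordinary set cover; and second, arranging the harmonic telescoping so that the intervals $\{n_{j+1}+1,\dots,n_j\}$ tile $\{1,\dots,n\}$ exactly, with no gap or double counting. I expect the first of these to be the main obstacle, since it is where the capacity and multiplicity structure could, if handled carelessly, break the clean set-cover-style charging.
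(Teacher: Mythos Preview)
Your proposal is correct and follows essentially the same route as the paper: apply Lemma~\ref{lemma_greedy_weighted_unsplittable}, use $OPT_j\le OPT$, and bound $\sum_j k_j/n_j$ by the harmonic number via the telescoping $n_{j+1}=n_j-k_j$. Your write-up is more careful than the paper's on two points---the ceiling-subadditivity observation (which the paper leaves implicit in ``compute from the assignment the weight'') and the explicit harmonic telescoping---but these are elaborations of the same argument, not a different one.
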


\begin{proof}
To see that the algorithm produces a logarithmic approximation, take
the sum over each $S_j$, $1\le j \le m$ and observe that $n_{j+1} = n_j - k_j$, we have
$$\sum_{1\le j\le m}S_j \le \sum_{1\le j \le
m}\frac{k_j}{n_j}\cdot OPT_j \le \left(\sum_{1\le j \le n}\frac{1}{j}\right)\cdot OPT
\le \ln n \cdot OPT.$$
To see the time complexity, notice that it requires $O(n)$ time to compute a most efficient move for each vertex, which leads to an $O(n^2)$ computation for the most efficient choice in each iteration. The number of iterations is upper bounded by $O(n)$ since at least one vertex is satisfied in each iteration.
\end{proof}


\subsection{Weighted Splittable Demand} \label{section_weighted_splittable}

In this section, we
present an algorithm that produces a $(4\ln n+2)$-approximation for
the {\em weighted capacitated domination problem with splittable
demand}. The difference between this algorithm and the previous one lies in the way we handle the demand assignment.  In each iteration the demand of a vertex may be partially served. 
The unsatisfied portion of the demand is called {\it residue demand}. For each vertex $u \in V$, let $rd(u)$ be the residue demand of $u$. $rd(u)$ is set equal to $d(u)$ initially, and will be updated accordingly when a portion of the residue demand is assigned. $u$ is said to be completely satisfied when $rd(u) = 0$.

\smallskip

We will inherit the notation used in the previous
section. We assume that the elements of $N_{ud}[u]$,
written as $v_{u,1},v_{u,2}, \ldots ,v_{u,|N_{ud}[u]|}$, are sorted according to
their demands in non-decreasing order.

\smallskip

In each iteration, the algorithm performs two greedy choices. First, the
algorithm chooses the vertex of the most efficiency from $V$, where
the efficiency is defined similarly as in the previous section with some modification since the demand is splittable.

\begin{figure*}[t]
\rule{\linewidth}{0.2mm}
\medskip
{{\sc Algorithm} {\em Split-Log-Approx}}

\begin{algorithmic}[1]
\STATE $rd(u) \longleftarrow d(u)$, and $map(u) \longleftarrow \phi$ for each $u \in V$.
\WHILE{there exist vertices with non-zero residue demand}
    \STATE // {\bf $1^{st}$ greedy choice}
    \STATE Pick a vertex in $V$ with the most efficiency, say $u$.
    \IF{$j_u$ equals $0$}
        \STATE Assign this amount $c(u)\cdot \left\lfloor \frac{rd(v_{u,1})}{c(u)} \right\rfloor$ of residue demand of $v_{u,1}$ to $u$.
        \STATE $map(v_{u,1}) \longleftarrow \left\{u\right\}$
    \ELSE
        \STATE Assign the residue demands of the vertices in $\{v_{u,1}, v_{u,2}, \ldots, v_{u,j_u}\}$ to $u$.
        \IF{$j_u < \left|N_{ud}[u]\right|$}
            \STATE Assign this amount $c(u)-\sum_{i=1}^{j_u}rd(v_{u,i})$ of residue demand of $v_{u,j_u+1}$ to $u$.
            \STATE $map(v_{u,j_u+1}) \longleftarrow map(v_{u,j_u+1}) \cup \left\{u\right\}$
        \ENDIF
    \ENDIF
    \STATE
    \STATE // {\bf $2^{nd}$ greedy choice}
    \IF{there is a vertex $u$ with $0 < rd(u) < \frac{1}{2}\cdot d(u)$}
        \STATE Satisfy $u$ by doubling the demand assignment of $u$ to vertices in $map(u)$.
    \ENDIF
\ENDWHILE
\STATE compute from the assignment the cost of the dominating set, and return the result.

\end{algorithmic}
\rule{\linewidth}{0.2mm} \caption{The pseudo-code for the 
weighted splittable demand model.} \label{Algorithm for
Weighted Splittable Demand}
\end{figure*}

\smallskip

For each vertex $u \in V$, let $j_u$ with $0 \le j_u \le \left|N_{ud}[u]\right|$ be the maximum index such that
$c(u) \ge \sum_{i=1}^{j_u}rd(v_{u,i})$.
Let $X(u) = \sum_{i=1}^{j_u}\frac{rd(v_{u,i})}{d(v_{u,i})}$ be the sum of the effectiveness over the vertices whose residue demand could be completely served by a single copy of $u$. In addition, we let $$Y(u) = \frac{c(u)-\sum_{i=1}^{j_u}rd(v_{u,i})}{d(v_{u,j_u+1})}$$ if $j_u < \left|N_{ud}[u]\right|$ and $Y(u) = 0$ otherwise. The efficiency of $u$ is defined as $\frac{X(u)+Y(u)}{w(u)}$.

Second, the algorithm maintains for each vertex $u \in V$ a set of vertices, denoted by $map(u)$, which consists of vertices that have partially served the demand of $u$ before $u$ is completely satisfied. 
That is, for each $v \in map(u)$ we have a non-zero demand assignment of $u$ to $v$. Whenever there exists a vertex $u$ whose residue demand is below half of its original demand, i.e., $0 < rd(u) < \frac{1}{2}\cdot d(u)$, after the first greedy choice, the algorithm immediately doubles the demand assignment of $u$ to the vertices in $map(u)$. Note that in this way, we can completely satisfy the demand of $u$ since $\sum_{v \in map(u)} f(u,v) > \frac{1}{2}\cdot d(u)$ . A high-level description of this algorithm is presented in Figure~\ref{Algorithm for
Weighted Splittable Demand}.

\begin{observation} \label{observation_splittable_half}
After each iteration, the residue demand of each unsatisfied vertex is at least half of its original demand.
\end{observation}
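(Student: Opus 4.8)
The plan is to prove the observation by induction on the iteration index, tracking the residue demand of every vertex across the two greedy choices within a single iteration. For the base case, before the first iteration every vertex $u$ has $rd(u) = d(u)$, so trivially $rd(u) \ge \frac{1}{2}\,d(u)$. For the inductive step, I would assume that at the end of the previous iteration every still-unsatisfied vertex $v$ satisfies $rd(v) \ge \frac{1}{2}\,d(v)$, and argue that the same property holds after the current iteration.

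The core of the argument is to show that the first greedy choice can push at most one vertex into the forbidden range $0 < rd(v) < \frac{1}{2}\,d(v)$. I would inspect the two branches of the first greedy choice separately. When $j_u = 0$, only $v_{u,1}$ has its residue altered: the assignment of $c(u)\cdot\left\lfloor rd(v_{u,1})/c(u)\right\rfloor$ leaves it with residue $rd(v_{u,1}) \bmod c(u)$, while every other vertex keeps the residue it had at the end of the previous iteration. When $j_u \ge 1$, the vertices $v_{u,1}, \ldots, v_{u,j_u}$ all have their residue driven to exactly $0$ and hence become satisfied, and only the single vertex $v_{u,j_u+1}$ (when it exists) is partially served, its residue strictly decreasing but remaining positive by the maximality of $j_u$. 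In either branch, every vertex other than the unique partially-served vertex is either untouched, and therefore still obeys the inductive hypothesis, or completely satisfied. Consequently at most one vertex, namely the partially-served one, can occupy the range $0 < rd < \frac{1}{2}\,d$ after the first greedy choice.

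It then remains to verify that the second greedy choice eliminates this last vertex. If the partially-served vertex $w$ falls in the range $0 < rd(w) < \frac{1}{2}\,d(w)$, then the amount of its demand already assigned is $d(w) - rd(w) > \frac{1}{2}\,d(w)$, so doubling the assignment of $w$ to the vertices in $map(w)$ delivers more than $d(w)$ units, completely satisfying $w$ and setting $rd(w) = 0$. If instead $rd(w) \ge \frac{1}{2}\,d(w)$ or $rd(w) = 0$, no vertex lies in the forbidden range and no correction is performed. In every case, after the second greedy choice no unsatisfied vertex remains in the forbidden range, which completes the induction.

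The main obstacle I anticipate is the bookkeeping in the first greedy choice: one must argue carefully that exactly one vertex is partially served per iteration, and that this conclusion relies on the inductive hypothesis that no vertex was already in the forbidden range beforehand. Otherwise a previously half-served vertex could coexist with the newly partially-served one, and the single corrective step of the second greedy choice would not suffice to restore the invariant. The remaining ingredient, checking that doubling a more-than-half assignment overshoots the full demand, is a short arithmetic step.
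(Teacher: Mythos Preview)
Your inductive structure is exactly the paper's: start from $rd=d$, show that each first greedy choice leaves at most one vertex partially served, and then argue the second greedy choice repairs that single vertex. The gap is in the last step, where you write ``the amount of its demand already assigned is $d(w)-rd(w)>\tfrac{1}{2}d(w)$, so doubling the assignment of $w$ to the vertices in $map(w)$ delivers more than $d(w)$ units.'' This implicitly assumes that $\sum_{v\in map(w)} f(w,v)=d(w)-rd(w)$, i.e.\ that $map(w)$ records \emph{every} vertex that has ever served $w$. That is false when the $j_u=0$ branch fires: line~7 of the algorithm \emph{resets} $map(w)$ to the singleton $\{u\}$, discarding any earlier servers. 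In that situation $f(w,u)=c(u)\lfloor rd_{\text{old}}(w)/c(u)\rfloor$ can be strictly smaller than $d(w)-rd(w)$, and your doubling inequality does not go through as stated.

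The paper closes this gap by treating the two branches separately. If $map(w)$ was only ever extended via line~12, your argument is fine. If $map(w)$ was last reset via line~7, one instead observes that $rd_{\text{new}}(w)=rd_{\text{old}}(w)\bmod c(u)<c(u)\le c(u)\lfloor rd_{\text{old}}(w)/c(u)\rfloor=f(w,u)$, so the single assignment in $map(w)$ already exceeds the remaining residue; doubling it therefore adds at least $rd_{\text{new}}(w)$ on top of the existing total $d(w)-rd_{\text{new}}(w)$, which suffices. You already isolated the $j_u=0$ branch in your analysis of the first greedy choice; you just need to carry that case distinction into the verification of the second choice rather than collapsing both cases into one inequality.
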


Clearly, the observation holds in the beginning when the demand of each vertex is not yet assigned. For later stages, we argue that the algorithm properly maintains $map$ so that in our second greedy choice, whenever there exists a vertex $u$ for which $0 < rd(u) < \frac{1}{2}\cdot d(u)$, it's always sufficient to double the demand assignment $f(u,v)$ of $u$ to $v$ for each $v \in map(u)$. If $map(u)$ is only modified under the condition $0<j_v<\left|N_{ud}[v]\right|$, (line 12 in Figure~\ref{Algorithm for Weighted Splittable Demand}), then $map(u)$ contains exactly the set of vertices that have partially served $u$. As mentioned above, since $rd(u) < \frac{1}{2}\cdot d(u)$, it's sufficient to double the demand assignment in this case so $d(u)$ is completely satisfied. If $map(u)$ is reassigned through the condition $j_v = 0$ for some stage, then we have $c(v)<rd(u)\le d(u)$. Since we assign this amount $c(v)\cdot \left\lfloor rd(u)/c(v)\right\rfloor$ of residue demand of $u$ to $v$, this leaves at most half of the original residue demand and $u$ will be satisfied by doubling this assignment.

\smallskip

By the description given above, we conclude that the algorithm produces a feasible demand assignment as well as a feasible capacitated dominating set. Let the cost incurred by the first greedy choice be $S_1$ and the cost by the second choice be $S_2$. To see that the solution achieves the desired approximation guarantee, first notice that $S_2$ is bounded above by $S_1$, for what we do in the second choice is merely to satisfy the residue demand of a vertex, if there exists one, by doubling its previous demand assignment.

\smallskip

In the following, we will bound the cost $S_1$.
For each iteration $j$, let $u_j$ be the vertex of the maximum efficiency and $OPT_j$ be the cost of the optimal solution for the remaining problem instance. Let $n_j = \sum_{u \in V}rd(u)/d(u)$ denote the sum of effectiveness of each vertex in the remaining problem instance at the beginning of this iteration.
Let $S_{1,j}$ be the cost incurred by the first greedy choice in iteration $j$.
Assume that the algorithm repeats for $m$ iterations. We have the following lemma.

\begin{lemma} \label{lemma_splittable_greedy_choice}
For each $j$, $1 \le j \le m$, we have $S_{1,j} \le \frac{n_j - n_{j+1}}{n_j}\cdot OPT_j$, where $n_j - n_{j+1}$ is the effectiveness covered by $u_j$ in iteration $j$.
\end{lemma}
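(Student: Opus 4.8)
The plan is to recast the claimed inequality as a statement purely about the \emph{efficiency} of the chosen vertex $u_j$, and then to establish that this efficiency is at least $n_j / OPT_j$ by a charging argument against the residual optimum. First I would observe that, writing $e_j$ for the effectiveness actually covered by the first greedy choice in iteration $j$, the two branches of the assignment rule both give $S_{1,j} = e_j / \text{efficiency}(u_j)$: in the case $j_{u_j}>0$ a single copy of $u_j$ is bought at cost $w(u_j)$ and covers effectiveness $X(u_j)+Y(u_j)$, while in the case $j_{u_j}=0$ exactly $\lfloor rd(v_{u_j,1})/c(u_j)\rfloor$ copies are bought and the ratio of covered effectiveness to cost is again $Y(u_j)/w(u_j)=\text{efficiency}(u_j)$. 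Since the total effectiveness $n$ is non-increasing across the two greedy choices, $e_j \le n_j - n_{j+1}$, so it suffices to prove the single inequality $\text{efficiency}(u_j) \ge n_j / OPT_j$.

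For that I would take an optimal solution $(D^\ast, f^\ast)$ of the residual instance, of cost $OPT_j$, and may assume without loss of generality that it does not over-serve, i.e. $\sum_{o} f^\ast(v,o) = rd(v)$ for every $v$; in particular $f^\ast(v,o) \le rd(v)$ for each server $o$. Summing the covered effectiveness over $v$ then gives $\sum_{o \in D^\ast}\sum_{v} f^\ast(v,o)/d(v) = \sum_v rd(v)/d(v) = n_j$. The goal becomes to charge each server's contribution to its single-copy efficiency, namely to show $\sum_{v} f^\ast(v,o)/d(v) \le x_{D^\ast}(o)\,\bigl(X(o)+Y(o)\bigr)$ for each $o$; summing over $o$ and using $X(o)+Y(o) = w(o)\cdot\text{efficiency}(o) \le w(o)\cdot\text{efficiency}(u_j)$ yields $n_j \le \text{efficiency}(u_j)\cdot\sum_o x_{D^\ast}(o)\,w(o) = \text{efficiency}(u_j)\cdot OPT_j$, which is exactly what is wanted.

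The heart of the proof, and the step I expect to be the main obstacle, is this per-server inequality. I would phrase it as a fractional-knapsack/concavity fact: for a fixed vertex $o$, let $g_o(C)$ be the maximum of $\sum_v a_v/d(v)$ over $0 \le a_v \le rd(v)$ with $\sum_v a_v \le C$. Because serving one unit of capacity to $v$ yields effectiveness $1/d(v)$, the maximum is attained by filling the residue demands in non-decreasing order of $d(v)$ --- which is precisely the order $v_{o,1},v_{o,2},\dots$ used by the algorithm --- so that $g_o(c(o)) = X(o)+Y(o)$. Moreover $g_o$ is concave with $g_o(0)=0$ (its marginal return $1/d(v)$ is non-increasing), hence $g_o(kC) \le k\,g_o(C)$ for every integer $k \ge 1$; applying this with $C=c(o)$ and $k=x_{D^\ast}(o)$, and noting that $o$'s contribution in $D^\ast$ is at most $g_o\bigl(c(o)\,x_{D^\ast}(o)\bigr)$ by its capacity constraint, delivers the bound. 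The two delicate points to get right are that the residue caps $a_v \le rd(v)$ must genuinely be in force (this is where the no-over-serving normalization of $f^\ast$ is used, as without an upper bound the optimum could dump all capacity on the smallest-demand neighbor), and that the algorithm's sort key must be the original demand $d(v)$ so that $X(o)+Y(o)$ truly realizes the single-copy maximum $g_o(c(o))$.
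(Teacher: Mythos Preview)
Your proposal is correct and follows essentially the same approach as the paper: both reduce the inequality to showing that the chosen vertex's efficiency is at least $n_j/OPT_j$, both note that the multi-copy branch ($j_{u_j}=0$) preserves the efficiency ratio, and both rely on the fact that sorting neighbors by original demand $d(v)$ makes $X(u)+Y(u)$ the best single-copy effectiveness (the fractional-knapsack observation). The paper's proof is terse and asserts the averaging step as ``obvious,'' whereas you make explicit the per-server charging and the concavity bound $g_o(kC)\le k\,g_o(C)$ needed when the optimum uses multiple copies of a server; this is a genuine fleshing-out of what the paper leaves implicit, not a different route.
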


\begin{proof}
The optimality of our choice in each iteration is obvious since we assume that the elements of $N_{ud}[u]$ are sorted according to their original demands. Note that only in the case $c(u) < rd(v_{u,1})$, the algorithm could possibly take more than one copy. 
In this case the efficiency of our choice remains unchanged since the cost and the effectiveness covered by $u$ grows by the same factor. Therefore the efficiency of our choice, $(n_j - n_{j+1})/S_{1,j}$, is always no less than that of the optimal solution, which is $n_j / OPT_j$, and the lemma follows.
\end{proof}

\begin{observation} \label{observation_splittable_n_decrease}
We have $n_j - n_{j+1} \ge \frac{1}{2}$ for each $1 \le j \le m$.
\end{observation}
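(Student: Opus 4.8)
The plan is to read the bound off directly from Observation~\ref{observation_splittable_half}. That observation guarantees that at the beginning of iteration $j$ every not-yet-satisfied vertex $v$ has $rd(v) \ge \frac{1}{2} d(v)$, so its effectiveness $rd(v)/d(v)$ is at least $\frac{1}{2}$. Consequently it suffices to show that \emph{at least one vertex becomes completely satisfied during iteration $j$}: the effectiveness of such a vertex falls from a value $\ge \frac{1}{2}$ all the way to $0$, so the total drop $n_j - n_{j+1}$ is at least $\frac{1}{2}$. I would establish the existence of such a vertex by a case analysis on the index $j_{u_j}$ associated with the chosen vertex $u_j$.

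If $j_{u_j} \ge 1$, the argument is immediate: by the definition of $j_{u_j}$ a single copy of $u_j$ suffices to absorb the entire residue demand of $v_{u_j,1}, \ldots, v_{u_j,j_{u_j}}$, and the first greedy choice assigns precisely these residue demands to $u_j$. Hence each of these $j_{u_j} \ge 1$ vertices, in particular $v_{u_j,1}$, is completely satisfied within the iteration.

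The remaining case $j_{u_j} = 0$, that is $c(u_j) < rd(v_{u_j,1})$, is where the two-stage greedy structure is needed, and here I would invoke the reasoning already present in the proof of Observation~\ref{observation_splittable_half}. The first greedy choice assigns the amount $c(u_j)\cdot\lfloor rd(v_{u_j,1})/c(u_j)\rfloor$ of the residue of $v_{u_j,1}$ to $u_j$; since $\lfloor rd(v_{u_j,1})/c(u_j)\rfloor \ge 1$, the residue remaining afterwards is strictly less than half of its value at the start of the iteration, hence strictly below $\frac{1}{2} d(v_{u_j,1})$. Therefore either the residue already reached $0$, or the guard $0 < rd(v_{u_j,1}) < \frac{1}{2} d(v_{u_j,1})$ of the second greedy choice is met and $v_{u_j,1}$ is completely satisfied by doubling its assignment over $map(v_{u_j,1})$. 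Either way $v_{u_j,1}$ ends the iteration completely satisfied.

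I expect the delicate point to be the $j_{u_j} = 0$ case, specifically checking that the leftover residue drops \emph{strictly} below $\frac{1}{2} d(v_{u_j,1})$ so that the second greedy choice is guaranteed to fire. This reduces to the elementary inequality $rd \bmod c < \frac{1}{2}\, rd$ whenever $rd > c$, which follows from $rd < (q+1)c \le 2qc$ for $q = \lfloor rd/c \rfloor \ge 1$, combined with the trivial bound $rd(v_{u_j,1}) \le d(v_{u_j,1})$. All other steps are direct consequences of Observation~\ref{observation_splittable_half} and the definitions of the two greedy choices.
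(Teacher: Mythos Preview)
Your proof is correct and follows the same line as the paper's: show that $v_{u_j,1}$ is completely satisfied during iteration $j$, then invoke Observation~\ref{observation_splittable_half} to conclude that its effectiveness contributed at least $\tfrac{1}{2}$ to $n_j$. The paper's proof asserts in one sentence that $v_{u,1}$ is satisfied, whereas you supply the explicit case analysis on $j_{u_j}$; the $j_{u_j}=0$ case you work out is precisely the argument the paper gives earlier in its discussion justifying Observation~\ref{observation_splittable_half}, so the content is the same.
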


\begin{proof}
For iteration $j$, $1\le j\le m$, let $u$ be the vertex of the maximum efficiency. Observe that $v_{u,1}$ will be satisfied after this iteration. By Observation~\ref{observation_splittable_half}, we have $rd(v_{u,1}) / d(v_{u,1}) > \frac{1}{2}$. The lemma follows.
\end{proof}

\smallskip

By Lemma~\ref{lemma_splittable_greedy_choice} we have 
$$\sum_{j=1}^mS_{1,j} \le \sum_{j=1}^{m-1} \frac{n_j - n_{j+1}}{n_j}\cdot OPT_j+ \frac{n_m}{n_m}\cdot OPT_m \le \left(\sum_{j=1}^{m-1} \frac{\left\lceil n_j-n_{j+1}\right\rceil}{\left\lfloor n_j\right\rfloor} + 1\right)\cdot OPT,$$
since $\left\lfloor r\right\rfloor \le r \le \left\lceil r \right\rceil$ for any real number $r$ and $OPT_j \le OPT$ for each $1\le j\le m$.

\begin{lemma}
$\sum_{j=1}^{m-1}\left\lceil n_j-n_{j+1}\right\rceil / \left\lfloor n_j\right\rfloor \le 2\ln n$
\end{lemma}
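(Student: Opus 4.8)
The plan is to bound the sum $\sum_{j=1}^{m-1}\lceil n_j - n_{j+1}\rceil / \lfloor n_j \rfloor$ by comparing it to a harmonic-type sum, exactly as in the classical greedy analysis of set cover, but with care taken for the ceiling and floor operations. The quantities $n_j$ form a strictly decreasing sequence (by Observation~\ref{observation_splittable_n_decrease}, each step decreases $n_j$ by at least $1/2$), starting from $n_1 = n$ and ending when all residue demands are zero. The intuition is that $\lceil n_j - n_{j+1}\rceil / \lfloor n_j \rfloor$ should behave like $(n_j - n_{j+1})/n_j$, which telescopes into $\int \frac{dn}{n} = \ln n$; the factor of $2$ arises from the roundings introduced by the ceiling in the numerator and the floor in the denominator.

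First I would establish the elementary inequality that for each term, $\lceil n_j - n_{j+1}\rceil / \lfloor n_j \rfloor \le 2(n_j - n_{j+1})/n_j$, or more precisely a bound that sums telescopically. The key observation to exploit is that since $n_j - n_{j+1} \ge \tfrac12$, we have $\lceil n_j - n_{j+1}\rceil \le 2(n_j - n_{j+1})$, because for any real $r \ge \tfrac12$ one has $\lceil r \rceil \le 2r$ (indeed $\lceil r \rceil \le r + 1 \le r + 2r\cdot\tfrac{1}{2}\cdot\ldots$, but cleanest is: $\lceil r\rceil < r+1 \le 2r$ when $r\ge 1$, and for $\tfrac12 \le r < 1$ we have $\lceil r \rceil = 1 \le 2r$). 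This handles the numerator. For the denominator, $\lfloor n_j \rfloor \ge n_j/2$ would be too weak for large $n_j$; instead I would use $\lfloor n_j \rfloor \ge n_j - 1$ and argue that since the relevant $n_j \ge \tfrac12$, the floor loses at most a constant factor, or better, absorb the rounding directly into the telescoping estimate.

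The cleanest route is to bound each summand by an integral. Writing $a_j = n_j - n_{j+1}$ for the amount covered in iteration $j$, I would show
$$\frac{\lceil a_j \rceil}{\lfloor n_j \rfloor} \le 2\cdot\frac{a_j}{n_j} \le 2\int_{n_{j+1}}^{n_j}\frac{dt}{t},$$
where the last inequality holds because $1/n_j \le 1/t$ for all $t \le n_j$, so $a_j/n_j \le \int_{n_{j+1}}^{n_j} dt/t$. The middle inequality is the one requiring the rounding analysis: combining $\lceil a_j\rceil \le 2a_j$ (from $a_j \ge \tfrac12$) with $\lfloor n_j\rfloor \ge n_j/2$ would give a factor of $4$, which is too lossy, so instead I must show $\lceil a_j\rceil/\lfloor n_j\rfloor \le 2 a_j/n_j$ more carefully, likely by treating the single potentially-problematic term where $n_j$ is small (close to $1$) separately from the bulk of the sum where $n_j$ is large and the floors and ceilings contribute negligibly. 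Summing the integral bound telescopes to $2\int_{n_m}^{n_1}\frac{dt}{t} = 2\ln(n_1/n_m) \le 2\ln n$, since $n_1 = n$ and $n_m \ge 1$ (the last iteration still has at least one unsatisfied vertex contributing effectiveness at least, by Observation~\ref{observation_splittable_half}, more than $\tfrac12$, giving $n_m \ge \tfrac12$, and one adjusts the constant or the final boundary term accordingly).

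The main obstacle I anticipate is the simultaneous handling of the ceiling in the numerator and the floor in the denominator without losing more than a factor of $2$ overall. Each rounding individually wants to cost a factor of $2$ (via $\lceil r\rceil \le 2r$ and $1/\lfloor r\rfloor \le 2/r$ for $r \ge \tfrac12$), which would naively yield $4\ln n$ rather than the claimed $2\ln n$. Resolving this tension is the crux: I expect one must observe that the worst case for the ceiling (small $a_j$, near $\tfrac12$) and the worst case for the floor (small $n_j$, near $\tfrac12$) cannot both bite at full strength across the whole sum, or alternatively that the floor rounding only matters for the final few terms and can be charged against the slack in the $n_m \ge \tfrac12$ boundary. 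Making this trade-off rigorous — rather than settling for the weaker $4\ln n$ bound — is where the real work lies.
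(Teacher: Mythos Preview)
Your approach has a genuine gap: the termwise inequality you are aiming for, namely $\lceil a_j\rceil / \lfloor n_j\rfloor \le 2\,a_j / n_j$, is \emph{false}. Take $a_j$ slightly above $\tfrac12$ and $n_j$ with fractional part close to $1$ (say $a_j = 0.5+\varepsilon$, $n_j = K + 1 - \varepsilon$ for integer $K$); then the left side is $1/K$ while the right side is roughly $1/(K+1)$. This is not a boundary pathology confined to small $n_j$: the failure occurs at every scale, so the ``treat small $n_j$ separately'' fix you suggest will not rescue the argument. You correctly diagnose that combining $\lceil a_j\rceil \le 2a_j$ with $\lfloor n_j\rfloor \ge n_j/2$ loses a factor of $4$, but the attempt to recover the factor of $2$ via a single termwise inequality cannot succeed.

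The paper avoids this by not bounding each term by a multiple of $a_j/n_j$ at all. Instead, it expands each summand as a block of $\lceil a_j\rceil$ unit fractions with consecutive integer denominators,
\[
\frac{\lceil a_j\rceil}{\lfloor n_j\rfloor}\;\le\;\frac{1}{\lfloor n_j\rfloor}+\frac{1}{\lfloor n_j\rfloor-1}+\cdots+\frac{1}{\lfloor n_j\rfloor-\lceil a_j\rceil+1},
\]
and then shows that the blocks for consecutive $j$ overlap in at most one denominator (using $\lfloor n_{j+1}\rfloor \le \lfloor n_j\rfloor - \lceil a_j\rceil + 1$). Hence across the whole sum every integer $k\le n$ appears as a denominator at most twice, giving two copies of the harmonic series and the bound $2\ln n$. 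The factor of $2$ is thus an overlap-counting phenomenon, not a rounding-loss phenomenon; your integral/telescoping framework does not see this structure.
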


\begin{proof}
Note that by Observation \ref{observation_splittable_half} and Observation \ref{observation_splittable_n_decrease}, we have $n_j > 1$ for all $j < m$. We will argue that this series together constitutes at most two harmonic series. By expanding the summand we have 
$$\frac{\left\lceil n_j-n_{j+1}\right\rceil}{\left\lfloor n_j\right\rfloor} \le \frac{1}{\left\lfloor n_j\right\rfloor}+\frac{1}{\left\lfloor n_j\right\rfloor-1}+\ldots+\frac{1}{\left\lfloor n_j\right\rfloor - \lceil n_j-n_{j+1}\rceil+1}.$$
Since $\lfloor n_{j+1} \rfloor = \lfloor n_j-(n_j-n_{j+1}) \rfloor \le \lfloor n_j \rfloor-\lfloor n_j-n_{j+1} \rfloor \le \lfloor n_j \rfloor
-\lceil n_j-n_{j+1}\rceil + 1$, the repetitions only occur at the first term and the last term if we expand the summation. By Observation \ref{observation_splittable_n_decrease}, the decrease of $n_j$ to $n_{j+1}$ is at least half. Therefore, the term $\left\lfloor n_j\right\rfloor - \left\lceil n_j-n_{j+1}\right\rceil + 1$ will never occur more than twice in the expansion. We conclude that
$\sum_{j=1}^{m-1}\left\lceil n_j-n_{j+1}\right\rceil / \left\lfloor n_j\right\rfloor \le 2 \ln n$
\end{proof}

\begin{theorem}
Algorithm {\em Split-Log-Approx}
computes a $(4\ln n + 2)$-approximation in $O(n^3)$ time,
where $n$ is the number of vertices, for
weighted capacitated domination problem with splittable demands.
\end{theorem}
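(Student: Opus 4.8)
The plan is to assemble the quantities already bounded in the preceding lemmas and observations into a single bound on the total cost, and then to handle the running time separately. Feasibility of the returned solution has already been argued in the discussion around Observation~\ref{observation_splittable_half}, so I would take that as given and concentrate on the cost guarantee and the time bound.

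For the cost, I would write the total cost as $S_1 + S_2$, the contributions of the first and second greedy choices. The key inequality derived just before the last lemma, namely $\sum_{j=1}^{m} S_{1,j} \le \bigl(\sum_{j=1}^{m-1}\lceil n_j - n_{j+1}\rceil / \lfloor n_j\rfloor + 1\bigr)\cdot OPT$, combined with the bound $\sum_{j=1}^{m-1}\lceil n_j - n_{j+1}\rceil / \lfloor n_j\rfloor \le 2\ln n$ of the immediately preceding lemma, yields $S_1 = \sum_{j=1}^m S_{1,j} \le (2\ln n + 1)\cdot OPT$. Since the second greedy choice merely doubles demand assignments that were already made (and charged) in the first choice, we have $S_2 \le S_1$ as noted earlier, so the total cost is at most $2 S_1 \le (4\ln n + 2)\cdot OPT$, which is exactly the claimed ratio.

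For the running time, I would argue as in Theorem~\ref{theorem_weighted_unsplittable_logn}. In each iteration, evaluating the efficiency $\frac{X(u)+Y(u)}{w(u)}$ of a single vertex $u$ requires its undominated neighbors ordered by residue demand, together with their prefix sums, in order to locate $j_u$; this costs $O(n)$ time per vertex once the neighborhood is sorted, hence $O(n^2)$ to select the most efficient vertex over all of $V$. Observation~\ref{observation_splittable_n_decrease}, together with the fact that $v_{u,1}$ is completely satisfied in each iteration (possibly after the second greedy choice), guarantees that at least one vertex leaves the set of unsatisfied vertices per iteration, so the number of iterations $m$ is at most $n$. Multiplying the two bounds gives $O(n^3)$.

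The substantive work has essentially been dispatched by the two preceding lemmas, so the only points that require care are the two constant-factor steps. The main one to verify is that $S_2 \le S_1$ genuinely holds on every execution: this rests on the invariant maintained in the proof of Observation~\ref{observation_splittable_half}, that $map(u)$ always records enough previously assigned demand for a single doubling to fully satisfy $u$, so that the extra cost incurred in the second choice never exceeds the cost already accounted for in $S_1$. The secondary check is that each iteration removes at least one vertex, which controls the iteration count and hence the running time. I do not expect any obstacle beyond these bookkeeping verifications.
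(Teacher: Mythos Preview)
Your proposal is correct and follows the paper's intended argument essentially verbatim: the paper states the theorem without an explicit proof, leaving it to be read off from the chain $S_2 \le S_1$, $S_1 \le (2\ln n + 1)\cdot OPT$ (via the displayed inequality and the final lemma), and the $O(n^3)$ timing as in Theorem~\ref{theorem_weighted_unsplittable_logn}. One tiny slip: in the paper the elements of $N_{ud}[u]$ are sorted by their \emph{original} demands, not their residue demands, but this does not affect either the approximation analysis or the $O(n)$-per-vertex timing bound you use.
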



\subsection{Unweighted Splittable Demand} \label{section_unweighted_splittable}

In this section, we consider the {\em unweighted capacitated domination problem with splittable demand} and present a $(2\ln n+1)$-approximation. In this case the weight $w(v)$ of each vertex $v \in V$ is considered to be 1 and the cost of the capacitated domination multiset $D$ corresponds to the total multiplicity of the vertices in $D$.   To this end, we first make a greedy reduction on the problem instance by spending at most $1\cdot OPT$ cost such that it takes at most one copy to satisfy each remaining unsatisfied vertex. Then we show that a $(2\ln n)$-approximation can be computed for the remaining problem instance, based on the same framework of Section~\ref{section_weighted_splittable}.

\smallskip

For each $u \in V$, let $g_u$ be the vertex in $N[u]$ with the maximum capacity. First, for each $u \in V$, we assign this amount $c(g_u) \cdot \left\lfloor \frac{d(u)}{c(g_u)} \right\rfloor$ of the demand of $u$ to $g_u$. Let the cost of this assignment be $S$, then we have the following lemma.

\begin{lemma} \label{lemma_last_copy}
We have $S \le OPT$, where $OPT$ is the cost of the optimal solution.
\end{lemma}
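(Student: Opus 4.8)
The plan is to show that the greedy pre-assignment, which for each vertex $u$ sends the amount $c(g_u)\cdot\lfloor d(u)/c(g_u)\rfloor$ of $u$'s demand to the maximum-capacity neighbor $g_u$, costs no more than the optimal solution $OPT$. The key observation is that this pre-assignment is essentially extracting the ``bulk'' portion of the demand that must be served by \emph{any} feasible solution, leaving behind only a residue strictly less than $c(g_u)$ for each $u$. So the strategy is to charge the cost $S$ incurred here against the cost that $OPT$ itself necessarily pays to route the demands through the graph.

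First I would compute $S$ explicitly. Since $g_u$ has capacity $c(g_u)$ and we assign $c(g_u)\cdot\lfloor d(u)/c(g_u)\rfloor$ units of $u$'s demand to it, the number of copies of $g_u$ needed to absorb this contribution from $u$ alone is $\lfloor d(u)/c(g_u)\rfloor$ (since in the unweighted case each copy costs $1$ and provides capacity $c(g_u)$). Summing over all $u$, I would write $S \le \sum_{u\in V}\lfloor d(u)/c(g_u)\rfloor$; the inequality rather than equality accounts for the fact that multiple vertices may load the same $g_u$, where the true number of copies is the ceiling of the aggregated load divided by capacity, which is at most the sum of the individual terms.

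Next I would lower-bound $OPT$. In any feasible solution with assignment $f$, the demand constraint forces $\sum_{v\in N[u]}f(u,v)\ge d(u)$ for every $u$. Since $g_u$ is the maximum-capacity vertex in $N[u]$, each vertex $v\in N[u]$ providing service to $u$ has $c(v)\le c(g_u)$, so each copy of any such $v$ can absorb at most $c(g_u)$ units of $u$'s demand. Hence the total multiplicity that $OPT$ must devote, across $N[u]$, simply to cover $u$'s demand is at least $d(u)/c(g_u)$, and therefore at least $\lceil d(u)/c(g_u)\rceil \ge \lfloor d(u)/c(g_u)\rfloor$ if the charging is done per-vertex. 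The cleanest route is to assign to each vertex $u$ a ``share'' of $OPT$'s total cost equal to its fractional demand-to-capacity contribution and argue the shares sum to a quantity dominating $S$.

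The main obstacle, which I would need to handle carefully, is the double-counting across shared providers: a single copy of a high-capacity vertex in $OPT$ may simultaneously serve the bulk demand of many neighbors, so I cannot naively sum per-vertex lower bounds of the form $\lfloor d(u)/c(g_u)\rfloor$ and expect them to be bounded by $OPT$'s cost. The resolution is to compare my pre-assignment copy-for-copy against $OPT$: for each vertex $w$ that $OPT$ uses as a server with multiplicity $x_{OPT}(w)$, the total demand $w$ absorbs is at most $c(w)\cdot x_{OPT}(w)$, and every unit of that demand comes from some neighbor $u$ with $c(g_u)\ge c(w)$; I would argue that my pre-assignment never spends more copies to absorb that same bulk demand than $OPT$ does, because I always route to the \emph{locally optimal} capacity $c(g_u)\ge c(w)$, so the bulk portion I extract is at least as efficiently packed. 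Making this accounting rigorous — matching each unit of demand extracted in the pre-assignment to a unit that $OPT$ pays for, while respecting the floor operations — is the delicate part, and I expect the inequality $S\le OPT$ to follow once this unit-to-unit charging is set up correctly.
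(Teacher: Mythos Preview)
Your approach is correct in spirit but you are making the final step much harder than it needs to be, and the ``delicate'' unit-to-unit charging you flag as the main obstacle actually dissolves in one line once you pass through the fractional relaxation, which is exactly what the paper does.

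The paper's entire argument is: let $O^*$ be the optimum of the relaxation where multiplicities $x(v)$ may be fractional. Then $O^*\le OPT$ trivially. In the relaxation, given any assignment $f$, the optimal $x(v)$ equals $\tfrac{1}{c(v)}\sum_{u\in N[v]}f(u,v)$, so the objective becomes $\sum_u\sum_{v\in N[u]}f(u,v)/c(v)$, which decomposes over $u$; for each $u$ the minimum under $\sum_v f(u,v)\ge d(u)$ is $d(u)/c(g_u)$, attained by routing everything to $g_u$. Hence $O^*=\sum_u d(u)/c(g_u)$, and $S=\sum_u\lfloor d(u)/c(g_u)\rfloor\le O^*\le OPT$.

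Your charging argument, once written out, is exactly the same computation done from the primal side rather than by naming the relaxation. Fix the optimal integral solution with assignment $f$ and multiplicities $x_{OPT}(w)$. By the capacity constraint,
\begin{align*}
OPT \;=\; \sum_w x_{OPT}(w) \;&\ge\; \sum_w \sum_{u\in N[w]} \frac{f(u,w)}{c(w)} \;=\; \sum_u \sum_{w\in N[u]} \frac{f(u,w)}{c(w)} \\
&\ge\; \sum_u \frac{1}{c(g_u)}\sum_{w\in N[u]} f(u,w) \;\ge\; \sum_u \frac{d(u)}{c(g_u)} \;\ge\; S,
\end{align*}
where the second inequality uses $c(w)\le c(g_u)$ for every $w\in N[u]$ and the third uses the demand constraint. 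There is no combinatorial matching to set up and no trouble with the floors: the double-counting problem you worry about disappears because the fractional charge $f(u,w)/c(w)$ is additive over pairs $(u,w)$ and the double sum swaps cleanly from the server side to the client side. So your proposal is not wrong, just unfinished; the missing step is precisely this two-line LP lower bound.
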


\begin{proof}
Notice that an optimal solution $O^*$ for the relaxation of this problem, where fractional copies are allowed, can be obtained by assigning the demand $d(u)$ of $u$ to $g_u$. Since $S \le O^*$ and $O^* \le OPT$, the lemma follows.
\end{proof}

In the following, we will assume that $d(u) \le c(g_u)$, for each $u \in V$. The algorithm of Section~\ref{section_weighted_splittable} is slightly modified. In particular, for the second greedy choice, whenever $rd(u) < d(u)$ for some vertex $u \in V$, we immediately assign the residue demand of $u$ to $g_u$. A high-level description of this algorithm 
is presented in Figure \ref{Algorithm for Unweighted Splittable Demand}.

\begin{figure*}[t]
\rule{\linewidth}{0.2mm}
\medskip
{{\sc Algorithm} {\em Unweighted-Split-Log-Approx}}

\begin{algorithmic}[1]
\STATE For each $u \in V$, assign $c(g_u) \cdot \left\lfloor \frac{d(u)}{c(g_u)} \right\rfloor$ demands of $u$ to $g_u$, where $g_u \in N[u]$ has the maximum capacity.
\STATE Reset the demands of the instance by setting $d(u) \longleftarrow rd(u)$ for each $u \in V$.
\WHILE{there exist vertices with non-zero residue demand}
    \STATE // {\bf $1^{st}$ greedy choice}
    \STATE Pick a vertex in $V$ with the most efficiency, say $u$.
        \STATE Assign the demands of the vertices in $\{v_{u,1}, v_{u,2}, \ldots, v_{u,j_u}\}$ to $u$.
        \IF{$j_u < \left|N_{ud}[u]\right|$}
            \STATE Assign this amount $c(u)-\sum_{i=1}^{j_u}rd(v_{u,i})$ of the residue demand of $v_{u,j_u+1}$ to $u$.
        \ENDIF
    \STATE
    \STATE // {\bf $2^{nd}$ greedy choice}
    \IF{there is a vertex $u$ with $0 < rd(u) < d(u)$}
        \STATE Satisfy $u$ by assigning the residue demand of u to $g_u$.
    \ENDIF
\ENDWHILE
\STATE compute from the assignment the cost of the dominating set, and return the result.

\end{algorithmic}
\rule{\linewidth}{0.2mm} \caption{The pseudo-code for the 
unweighted splittable demand model.} \label{Algorithm for Unweighted Splittable Demand}
\end{figure*}

\begin{observation} \label{observation_unweighted_splittable_n_decrese}
We have $n_j - n_{j+1} \ge 1$ for each $1 \le j \le m$.
\end{observation}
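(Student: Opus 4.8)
The plan is to follow the same skeleton as the proof of Observation~\ref{observation_splittable_n_decrease}, but to replace the ``at least half'' guarantee by an exact ``fully served'' guarantee that is made available by the preprocessing step. The key consequence of that step is that $d(u) \le c(g_u)$ holds for every $u \in V$, so a single copy of $g_u$ can always absorb the entire (residue) demand of $u$. Accordingly, I would first record the invariant that plays the role of Observation~\ref{observation_splittable_half} here: after each iteration, every still-unsatisfied vertex $u$ has $rd(u)=d(u)$, equivalently $rd(u)/d(u)\in\{0,1\}$. The base case is immediate, since line~2 resets $rd(u)=d(u)$ for all $u$.

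For the inductive step I would show that one iteration partially serves at most a single vertex, which is then completed by the second greedy choice. Assume every unsatisfied vertex has full residue at the start of iteration~$j$, and let $u$ be the chosen vertex. The first greedy choice completely satisfies $v_{u,1},\ldots,v_{u,j_u}$ and, when $j_u<|N_{ud}[u]|$, assigns a single copy's worth of capacity to a contiguous prefix so that only $v_{u,j_u+1}$ is left partially served, with $0<rd(v_{u,j_u+1})<d(v_{u,j_u+1})$; all other unsatisfied vertices still have full residue. The second greedy choice detects exactly this vertex and assigns its residue demand to $g_{v_{u,j_u+1}}$; since $rd(v_{u,j_u+1}) \le d(v_{u,j_u+1}) \le c(g_{v_{u,j_u+1}})$, one copy suffices and the vertex becomes completely satisfied. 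Hence every unsatisfied vertex again has full residue at the end of iteration~$j$, and the invariant is maintained.

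With the invariant in hand the observation is immediate, exactly as in the proof of Observation~\ref{observation_splittable_n_decrease}: the vertex $v_{u,1}$ is completely satisfied by the end of iteration~$j$ (either directly in the first greedy choice when $j_u\ge 1$, or through the second greedy choice when $j_u=0$), and by the invariant it contributed effectiveness $rd(v_{u,1})/d(v_{u,1})=1$ to $n_j$ at the start of the iteration. Removing this full unit of effectiveness yields $n_j-n_{j+1}\ge 1$. I expect the only delicate point to be the bookkeeping in the inductive step — namely checking that no more than one vertex is ever left partially served and that the single-copy assignment to $g_{v_{u,j_u+1}}$ is always feasible — and both reduce to the reduction guarantee $d(u)\le c(g_u)$ together with the fact that one copy of $u$ distributes its capacity over a prefix of the demand-sorted neighborhood $N_{ud}[u]$.
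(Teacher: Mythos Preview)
Your proposal is correct and follows essentially the same approach as the paper: both rest on the invariant that every still-unsatisfied vertex has $rd(u)=d(u)$ at the end of each iteration, together with the fact that at least one vertex becomes completely satisfied per iteration. The paper states this in a single sentence, whereas you spell out the inductive maintenance of the invariant and the $j_u=0$ corner case; the extra detail is sound but not a different argument.
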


\begin{proof}
Observe that in each iteration, at least one vertex is satisfied and the residue demand of each unsatisfied vertex is equal to its original demand.
\end{proof}

Clearly, $S_2$ is bounded above by $S_1$, as we always take one copy for the first greedy choice and at most one copy for the second greedy choice in each iteration.
By Observation~\ref{observation_unweighted_splittable_n_decrese} and the fact that $n_j$ is integral for each $1 \le j \le m$, we have 
$$\sum_{j=1}^mS_{2,j} \le \sum_{j=1}^m \frac{n_j - n_{j+1}}{n_j}\cdot OPT_j \le \ln n\cdot OPT,$$
and $S+\sum_{j=1}^m\left(S_{1,j}+S_{2,j}\right) \le (2\ln n+1) \cdot OPT.$

We conclude the result as the following theorem.

\begin{theorem}
Algorithm {\em Unweight-Split-Log-Approx} computes
a $(2\ln n + 1)$-approximation in $O(n^3)$ time for weighted capacitated domination
problem with unsplittable demands, where $n$ is the number of
vertices.
\end{theorem}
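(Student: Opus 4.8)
The plan is to add up the three sources of cost and bound each against $OPT$. Writing the total cost as $S + \sum_{j=1}^m (S_{1,j} + S_{2,j})$, where $S$ is the cost of the preprocessing step (line~1) and $S_{1,j}, S_{2,j}$ are the costs of the first and second greedy choices in iteration $j$, I would first invoke Lemma~\ref{lemma_last_copy} to get $S \le OPT$, and then concentrate on the two greedy sums. The point of the preprocessing is precisely that it reduces the instance to one in which $d(u) \le c(g_u)$ for every $u$, so that a single copy always suffices, and this is what makes the remaining analysis clean.

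Next I would establish the per-iteration guarantee $S_{1,j} \le \frac{n_j - n_{j+1}}{n_j}\cdot OPT_j$. This is the exact analogue of Lemma~\ref{lemma_splittable_greedy_choice}, and the same argument carries over verbatim: because each remaining vertex can be satisfied by one copy, the first greedy choice always takes exactly one copy, so its efficiency $(n_j - n_{j+1})/S_{1,j}$ is at least the average efficiency $n_j / OPT_j$ of the optimal solution on the remaining instance. Combined with $OPT_j \le OPT$, this reduces everything to bounding $\sum_{j=1}^m \frac{n_j - n_{j+1}}{n_j}$.

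The crucial step, and the one I expect to carry the whole argument, is the \emph{integrality} of $n_j$. By Observation~\ref{observation_unweighted_splittable_n_decrese} the residue demand of every unsatisfied vertex equals its original demand at the start of each iteration, so $n_j = \sum_{u\in V} rd(u)/d(u)$ is simply the number of currently unsatisfied vertices: an integer that drops by at least one each round. This is exactly what was missing in the weighted splittable case of Section~\ref{section_weighted_splittable}, where a fractional $n_j$ forced terms to repeat up to twice and so doubled the bound. With integral, strictly decreasing $n_j$, the expansion $\frac{n_j - n_{j+1}}{n_j} \le \frac{1}{n_j} + \frac{1}{n_j - 1} + \cdots + \frac{1}{n_{j+1}+1}$ yields ranges $\{n_{j+1}+1, \ldots, n_j\}$ that are pairwise disjoint across iterations, so the entire sum telescopes into a single harmonic series and is bounded by $\ln n$. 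Hence $\sum_{j=1}^m S_{1,j} \le \ln n \cdot OPT$.

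Finally I would dispatch the second greedy choice by the observation that it takes at most one copy per iteration against exactly one in the first choice, giving $\sum_{j=1}^m S_{2,j} \le \sum_{j=1}^m S_{1,j} \le \ln n \cdot OPT$. Summing the three bounds yields $S + \sum_{j=1}^m (S_{1,j} + S_{2,j}) \le OPT + \ln n \cdot OPT + \ln n \cdot OPT = (2\ln n + 1)\cdot OPT$. The $O(n^3)$ running time follows exactly as in Theorem~\ref{theorem_weighted_unsplittable_logn}: computing the most efficient vertex costs $O(n)$ per vertex, hence $O(n^2)$ per iteration, over $O(n)$ iterations, since at least one vertex is satisfied each round.
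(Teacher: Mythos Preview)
Your proposal is correct and follows essentially the same route as the paper: invoke Lemma~\ref{lemma_last_copy} for the preprocessing cost, reuse the efficiency argument of Lemma~\ref{lemma_splittable_greedy_choice} for $S_{1,j}$, exploit Observation~\ref{observation_unweighted_splittable_n_decrese} to get integral, strictly decreasing $n_j$ so that the harmonic sum collapses to $\ln n$, and bound $S_2$ by $S_1$. Your exposition of the integrality point is in fact more explicit than the paper's, which states the telescoping in one line.
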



\section{Parameterized Results} \label{parameterized_results}

\subsection{Hardness Results} \label{section_w_1_hardness}

In this section we show that {\em Capacitated Domination Problem} is {\it W[1]}-hard when parameterized by
treewidth by making a reduction from {\em k-Multicolor Clique}, a
restriction of {\em k-Clique} problem.

\begin{definition}[\sc Multicolor Clique]
Given an integer $k$ and a connected undirected graph $G = \left(\bigcup_{i=1}^k V[i], E\right)$ such that $V[i]$ induces an independent set for each $i$, the {\sc Multicolor Clique} problem asks whether or not there exists a clique of size $k$ in $G$.
\end{definition}

Given an instance $(G,k)$ of {\sc Multicolor Clique}, we will show how an instance $\G=(\V,\E)$ of {\em Capacitated Domination} with treewidth $O(k^2)$ can be built such that $G$ has a clique of size $k$ if and only if $\G$ has a capacitated dominating set of cost at most $k^\prime=(3k^2-k)/2$. For convenience, we shall distinguish the vertices of $\G$ by referring to them as {\em nodes}.

\smallskip

Let $N$ be the number of vertices. Without loss of generality, we label the vertices of $G$ by numbers, denoted $label(v)$, $v \in V$, between $1$ and $N$. For each $i\ne j$, let $E[i,j]$ denote the set of edges between $V[i]$ and $V[j]$. The graph $\G$ is defined as follows. For each $i$, $1\le i \le k$, we create a node $x_i$ with $w(x_i) = k^\prime+1$, $c(x_i) = 0$, and $d(x_i) = 1$. For each $u \in V[i]$, we have a node $\overline{u}$ with $w(\overline{u}) = 1$, $c(\overline{u}) = 1+(k-1)N$, and $d(\overline{u}) = 0$. We also connect $\overline{u}$ to $x_i$. For convenience, we refer to the star rooted at $x_i$ as vertex star $T_i$. 

\smallskip

Similarly, for each $1\le i<j\le k$, we create a node $y_{ij}$ with $w(y_{ij}) = k^\prime+1$, $c(x_i) = 0$, and $d(x_i) = 1$. For each $e\in E[i,j]$ we have a node $\overline{e}$ with $w(\overline{e}) = 1$, $c(\overline{e}) = 1+2N$, and $d(\overline{e}) = 0$. We connect $\overline{e}$ to $y_{ij}$. We refer to the star rooted at $y_{ij}$ as edge star $T_{ij}$. The selection of nodes in $T_i$ and $T_{ij}$ in the capacitated dominating set will correspond to the choices made in selecting the vertices that form a clique in $G$.

\smallskip

In addition, for each $i\ne j$, $1 \le i,j \le k$, we create two bridge nodes $b^1_{i,j}$, $b^2_{i,j}$ with $w(b^1_{i,j}) = w(b^2_{i,j}) = 1$ and $d(b^1_{i,j}) = d(b^2_{i,j}) = 1$. The capacities of the bridge nodes are to be defined later. Now we describe the way how stars $T_i$ and $T_{ij}$ are connected to bridge nodes such that the reduction claimed above holds. For each $i\ne j$, $1 \le i,j \le k$ and for each $v \in V[i]$, 
we create two propagation nodes $p^1_{v,i,j}$, $p^2_{v,i,j}$ and connect them to $\overline{v}$. Besides, we connect $p^1_{v,i,j}$ to $b^1_{i,j}$ and $p^2_{v,i,j}$ to $b^2_{i,j}$. We set $w(p^1_{v,i,j}) = w(p^2_{v,i,j}) = k^\prime+1$ and $c(p^1_{v,i,j}) = c(p^2_{v,i,j}) = 0$. The demands of $p^1_{v,i,j}$ and $p^2_{v,i,j}$ are set to be $d(p^1_{v,i,j}) = label(v)$ and $d(p^2_{v,i,j}) = N - label(v)$. For each $1 \le i<j\le k$ and for each $e = (u,v) \in E[i,j]$, we create four propagation nodes $p^1_{e,i,j}$, $p^2_{e,i,j}$, $p^1_{e,j,i}$, and $p^2_{e,j,i}$ with zero capacity and $k^\prime+1$ cost. Without loss of generality, we assume that $u \in V[i]$ and $v \in V[j]$. The demands of the four nodes are set as the following: $d(p^1_{e,i,j}) = N-label(u)$, $d(p^2_{e,i,j}) = label(u)$, $d(p^1_{e,j,i}) = N-label(v)$, and $d(p^2_{e,j,i}) = label(v)$. 
Finally, for each bridge node $b$, we set $c(b) = \sum_{u \in N[b]}d(b) - N$.

\begin{figure}[h]
\centering
\caption{The connections between stars and bridge nodes.}
\includegraphics{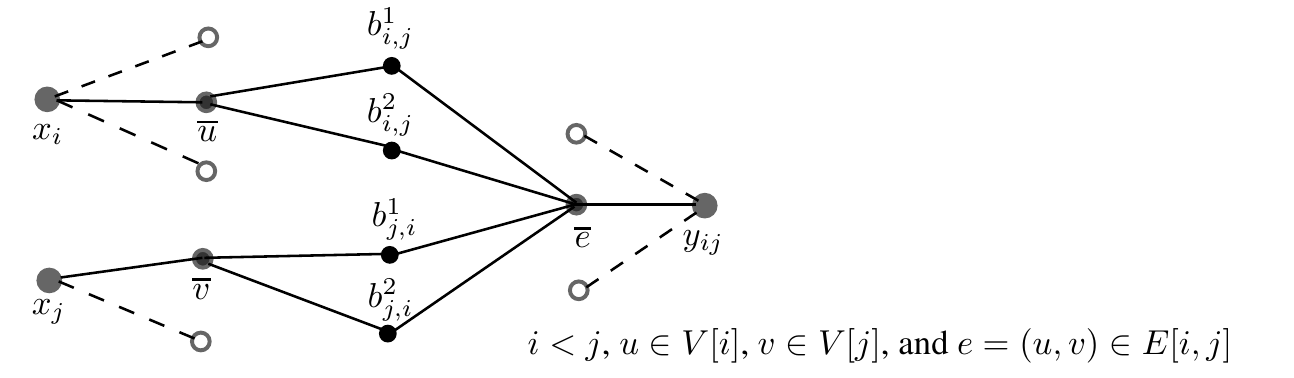}
\end{figure}

\begin{lemma} \label{lemma_w_1_hardness}
The treewidth of $\G$ is $O(k^2)$. Furthermore, $G$ admits a clique of size $k$ if and only if $\G$ admits a capacitated dominating set of cost at most $k^\prime = (3k^2-k)/2$.
\end{lemma}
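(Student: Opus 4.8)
The plan is to establish the two assertions of the lemma separately: first the treewidth bound, and then the biconditional, whose two directions I would treat in turn.

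\textbf{Treewidth.} I would exhibit a tree decomposition of width $O(k^2)$ by making the $O(k^2)$ bridge nodes $\{b^1_{i,j},b^2_{i,j}\}$ global, i.e.\ placing all of them in every bag. Every other node is a star root ($x_i$ or $y_{ij}$), a star leaf ($\overline{u}$ or $\overline{e}$), or a propagation node, and each propagation node is adjacent only to one star leaf and one bridge. Hence I can build a backbone of $O(k^2)$ bags, each consisting of all bridge nodes together with a single star root, and hang off the bag of a given root, for each leaf of that star, one bag containing the root, that leaf, and the leaf's (at most $2(k-1)$, resp.\ $4$) propagation nodes. Every edge of $\G$ then lies in some bag and every vertex occupies a connected subtree, so this is a valid tree decomposition of width $O(k^2)+O(k)=O(k^2)$.

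\textbf{Clique $\Rightarrow$ dominating set.} Given a clique with one vertex $u_i\in V[i]$ per color, let $e_{ij}$ be the edge it induces between colors $i$ and $j$. I would take $D$ to consist of one copy each of $\overline{u_i}$, of $\overline{e_{ij}}$, and of every bridge node, with the natural assignment: route the demand of $x_i$ and of the propagation nodes $p^s_{u_i,i,j}$ to $\overline{u_i}$, route the demand of $y_{ij}$ and of the four propagation nodes of $e_{ij}$ to $\overline{e_{ij}}$, and route each bridge's own unit demand together with all remaining (unchosen) propagation demand in its neighborhood to that bridge. I then check feasibility: $\overline{u_i}$ and $\overline{e_{ij}}$ are loaded to exactly their capacities $1+(k-1)N$ and $1+2N$, while each bridge $b$ receives exactly $\left(\sum_{w\in N[b]}d(w)\right)-N=c(b)$ units, the $-N$ appearing precisely because the chosen vertex and chosen edge jointly discharge $N$ units of its neighborhood; here the clique property, namely that $label(u_i)$ equals the label of the relevant endpoint of $e_{ij}$, is exactly what makes the two contributions sum to $N$. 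Summing the costs yields exactly $k'$.

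\textbf{Dominating set $\Rightarrow$ clique.} Here I would argue the budget is tight and leaves no freedom. Since $x_i$, $y_{ij}$, and every propagation node have capacity $0$ while the star roots cost $k'+1$, the demand of each $x_i$ forces some $\overline{u}$ with $u\in V[i]$ into $D$, the demand of each $y_{ij}$ forces some $\overline{e}$ with $e\in E[i,j]$, and each bridge node, whose only possible provider is itself, must lie in $D$. These mandatory selections already account for the full budget $k'$, so a solution of cost at most $k'$ contains \emph{exactly} one vertex-star node per color, exactly one edge-star node per pair, every bridge node with multiplicity one, and nothing else. Writing $u_i,e_{ij}$ for the chosen vertex and edge, I then read the capacity constraints at the single copies of $b^1_{i,j}$ and $b^2_{i,j}$: the former gives $label(\text{$i$-endpoint of }e_{ij})\le label(u_i)$ and the latter the reverse inequality, so together they force equality; applying this in both directions $(i,j)$ and $(j,i)$ shows $e_{ij}$ joins $u_i$ to $u_j$ for every pair, whence $\{u_1,\dots,u_k\}$ is a clique.

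\textbf{Main obstacle.} The delicate part is the reverse direction. I must show the budget is met with equality, so that no extra copies or nodes can be afforded and every bridge is selected exactly once; only then does each bridge's capacity, being exactly $N$ short of its neighborhood demand, convert the single-copy feasibility condition into the sharp label (in)equalities that pin down the clique. Verifying that the forced cost of the mandatory selections lands exactly on $k'$, and that one copy per bridge is simultaneously necessary and sufficient, is where the numerology of the gadget must be checked with care.
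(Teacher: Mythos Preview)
Your proposal is correct and follows essentially the same route as the paper. The one cosmetic difference is the treewidth bound: the paper simply observes that deleting the $O(k^2)$ bridge nodes from $\G$ leaves a forest and invokes the fact that deleting a vertex drops treewidth by at most one, whereas you build an explicit decomposition; both arguments are valid and give the same bound, the paper's being a one-line shortcut and yours being more hands-on.
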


\begin{proof}[Proof of Lemma~\ref{lemma_w_1_hardness}.]
Consider the set of bridge nodes, $Bridge = \bigcup_{i \ne j}\left\{b^1_{i,j} \cup b^2_{i,j}\right\}$. Since $\G \backslash Bridge$ is a forest and the removal of a vertex from a graph decreases the treewidth of the graph by at most one, the treewidth of $\G$ is upper bounded by the number of bridge nodes plus some constant, which is $O(k^2)$.

Let $C$ be a clique of size $k$ in $G$. By choosing the bridge nodes, $b^1_{i,j}$ and $b^2_{i,j}$ for each $i\ne j$, $\overline{u}$ for each $u \in C$, and $\overline{e}$ for each $e \in C$ exactly once, we have a vertex subset of cost exactly $(3k^2-k)/2$. One can easily verify that this is also a feasible capacitated dominating multi-set for $\G$.

On the other hand, let $D$ be a capacitated donimating multi-set of cost at most $k^\prime$ in $\G$. First observe that none of the propagation nodes are chosen in $D$, otherwise the cost would exceed $k^\prime$. This implies $b^1_{i,j} \in D$ and $b^2_{i,j} \in D$ for each $i\ne j$. Note that this already contributes cost at least $k(k-1)$ to $D$ and the rest of the nodes in $D$ together contributes at most $k(k+1)/2$. 

Similarly, we have $x_i \notin D$ and $y_{ij} \notin D$ for each $i \ne j$. Therefore, for each $1\le i\le k$, $\exists u \in V[i]$ such that $\overline{u} \in D$, and for each $i \ne j$, $\exists e \in E[i,j]$ such that $\overline{e} \in D$. Since we have $k(k-1)/2+k = k(k+1)/2$ such stars, exactly one node from each star is chosen to be included in $D$ and each node of $D$ is chosen exactly once. Next we argue that the nodes chosen in each star will correspond to a clique of size $k$ in $G$.

For each $1\le i<j \le k$, let $\overline{u} \in T_i$ and $\overline{v} \in T_j$ be the nodes chosen in $D$. Let $\overline{e} \in T_{ij}$ be the node chosen in $D$. In the following we shall prove that $e = (u,v)$. Since the capacity of $\overline{u}$ equals the sum of the demands over $N[\overline{u}]$, the closed neighborhood of $\overline{u}$, without loss of generality we can assume that the demands of nodes in $N[\overline{u}]$ are served by $\overline{u}$. Consider the bridge vertex $b^1_{ij}$ and the set $S = N[b^1_{ij}] \backslash N[\overline{u}]$. The demands of vertices in $S$ can only be served by either $b^1_{ij}$ or $\overline{e}$, as they are the only two vertices in $N[S]$ chosen to be included in $D$. In particular, vertices in $S$ apart from $p^1_{e,i,j}$ can only be served by $b^1_{ij}$. Notice that the sum of the demands in $S$ is $N-label(u)$ above the capacity of $b^1_{ij}$. Therefore we have $d(p^1_{e,i,j}) \ge N-label(u)$, which implies $d(p^2_{e,i,j}) \le label(u)$ as well since we have $d(p^1_{e,i,j})+d(p^2_{e,i,j})=N$ by our setting.

By a symmetric argument on $b^2_{i,j}$ we obtain $d(p^2_{e,i,j}) \ge label(u)$. Hence $d(p^2_{e,i,j}) = label(u)$. By another symmetric argument on $b^1_{ji}$ and $b^2_{ji}$, we have $d(p^2_{e,j,i}) = label(v)$. Therefore $e = (u,v)$ by our construction.
\end{proof}

Note that this proof holds for both splittable and unsplittable demand models. We have the following theorem.

\begin{theorem} \label{theorem_w_1_hardness}
The {\em Capacitated Domination problem} is {\it W[1]}-hard when parameterized by treewidth.
\end{theorem}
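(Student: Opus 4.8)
The plan is to derive the theorem directly from Lemma~\ref{lemma_w_1_hardness} by checking that the construction preceding it constitutes a \emph{standard parameterized reduction} in the sense of the definition given in Section~\ref{preliminary}. The source problem is {\sc Multicolor Clique} parameterized by the number of color classes $k$, which is known to be {\it W[1]}-hard (it is equivalent to {\sc Clique} parameterized by solution size, and partitioning the vertices into independent color classes does not make the problem easier). The target problem is the decision version of {\em Capacitated Domination}, where an instance is a pair $\langle\G,k'\rangle$ asking whether $\G$ admits a capacitated dominating set of cost at most $k'$, parameterized by the treewidth of $\G$. Thus I must exhibit a map $(G,k)\mapsto\langle\G,k'\rangle$ together with a parameter function $g$ meeting all three requirements of the definition.

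First I would verify the time bound: the construction creates $O(k^2)$ stars, $O(k^2)$ bridge nodes, and $O(k^2 N)$ propagation nodes, where $N=|V|$, and every weight, capacity, and demand is computed by direct inspection of the labels and edge sets of $G$. Hence $\G$ and the threshold $k'=(3k^2-k)/2$ are produced in time polynomial in $|x|$ with at most a polynomial factor in $k$, which is of the admissible form $f(k)\cdot|x|^\alpha$. Second, I would take the new parameter to be $g(k)=\mathrm{tw}(\G)$; by the first assertion of Lemma~\ref{lemma_w_1_hardness} this is $O(k^2)$, so it is bounded by a function of $k$ alone, independent of $|x|$, exactly as the definition demands. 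Third, the equivalence that $G$ has a $k$-clique if and only if $\G$ admits a capacitated dominating set of cost at most $k'$ is precisely the biconditional established in the second assertion of Lemma~\ref{lemma_w_1_hardness}. Since that proof manipulates only the demand/capacity balance at the bridge nodes, it is insensitive to whether demand may be split, so the reduction is valid for both the splittable and the unsplittable models.

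The genuinely delicate point — and the one already discharged by Lemma~\ref{lemma_w_1_hardness} rather than by this theorem — is the requirement that the target parameter depend on $k$ only. This is exactly the bound $\mathrm{tw}(\G)=O(k^2)$, obtained from the observation that deleting the $O(k^2)$ bridge nodes leaves a forest and each deletion lowers treewidth by at most one. Were this bound to grow with $N$, the map would fail to be a parameterized reduction even though the yes/no equivalence would still hold; this is the step I would emphasize. For completeness I would also note that a width-$O(k^2)$ tree decomposition can be produced explicitly from the forest structure, so the reduction can hand the decomposition itself to the target instance if one insists that the treewidth be certified rather than merely bounded. Assembling these three checks with the known {\it W[1]}-hardness of {\sc Multicolor Clique} yields the claimed {\it W[1]}-hardness of {\em Capacitated Domination} parameterized by treewidth.
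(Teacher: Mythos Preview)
Your proposal is correct and follows the same approach as the paper: the paper's own proof simply states that the theorem follows from Lemma~\ref{lemma_w_1_hardness} together with the fact that the reduction is computable in time polynomial in $k$ and $N$. Your write-up expands this into an explicit verification of the three clauses of the definition of a standard parameterized reduction, which is exactly what the paper leaves implicit; the only minor imprecision is phrasing the new parameter as $g(k)=\mathrm{tw}(\G)$ before bounding it by $O(k^2)$, but you immediately repair this, so there is no gap.
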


\begin{proof}[Proof of Theorem~\ref{theorem_w_1_hardness}.]
This theorem follows directly from Lemma~\ref{lemma_w_1_hardness} and the fact that this reduction can be computed in time polynomial in both $k$ and $N$.
\end{proof}


\subsection{FPT Algorithms on Graphs of Bounded Treewidth}

In this section we show that {\em Capacitated Domination Problem} with unsplittable demand is FTP when parameterized by both treewidth and maximum capacity by giving a $2^{2k(\log M+1)+\log k+O(1)}\cdot n$ exact algorithm.

\smallskip

To this end, we give a dynamic programming algorithm on a so-called {\em nice tree decomposition} \cite{DBLP:books/sp/Kloks94} of the input graph G. In the following, without loss of generality, we shall assume that the bag associated with the root of $T$ is empty. For each node $i$ in the tree $T$, let $T_i$ be the subtree rooted at $i$ and $Y_i := \bigcup_{j \in T_i}X_j$. Starting from the leaf nodes of $T$, our algorithm proceeds in a bottom-up manner and maintains for each node $i$ of $T$ a table $A_i$ whose columns consist of the following information.

\begin{itemize}
\item{$P$ with $P \subseteq X_i$ indicating the set of vertices in $X_i$ that have been served, and}
\item{$rc(u)$ with $0 \le rc(u) < c(u)$ indicating the residue capacity of $u$, for each $u \in X_i$.}
\end{itemize}

Clearly, each row of $A_i$ corresponds to a possible configuration consisting of the unsatisfied vertices and the residue capacity of each vertex in $X_i$ that can be used. The algorithm computes for each row of $A_i$ the cost of the optimal solution to the subgraph induced by $Y_i$ under the constraint that the configuration of vertices in $X_i$ agrees with that specified by the values of the row.

\smallskip

In the following, we describe the computation of the table $A_i$ for each node $i$ in the tree $T$ in more detail. In order to keep the content clean, we use the terms "insert a new row" and "replace an old row by the new one" interchangeably. Whenever the algorithm attempts to insert a new row into a table while another row with identical configuration already exists, the one with the smaller cost will be kept.  According to different types of vertices we encounter during processing, we have the following situations.

\begin{itemize}
\item{\bf $i$ is a leaf node.} Let $X_i = \left\{v\right\}$. We add two rows to the table $A_i$ which correspond to cases whether or not $v$ is served.

\smallskip

\begin{algorithmic}[1]
\STATE let $r_1 = (\left\{ \phi \right\}, \left\{rc(v)=0\right\})$ be a new row with $cost(r_1) \longleftarrow 0$
	\STATE let $r_2 = (\left\{v\right\}, \{rc(v)=d(v)$ mod $c(v)\})$ be a new row with $cost(r_2) \longleftarrow w(v) \cdot \left\lceil \frac{d(v)}{c(v)} \right\rceil$
	\STATE add $r_1$ and $r_2$ to $A_i$
	\newline \rule{\linewidth}{0.2mm}
\end{algorithmic}

\item{\bf $i$ is an introduce node.} Let $j$ be the child of $i$, and let $X_i = X_j \cup \left\{v\right\}$. The data in $A_j$ is basically inherited by $A_i$.We extend $A_i$ by considering, for each existing row $r$ in $A_j$, all $2^{\left|X_j \backslash P_r\right|}$ possible ways of choosing vertices in $X_j \backslash P_r$ to be assigned to $v$. In addition, $v$ can be either unassigned or assigned to any vertex in $X_i$. In either case, the cost and the residue capacity are modified accordingly. 

\smallskip

\begin{algorithmic}[1]
\FORALL{row $r_0 = (P, R) \in A_j$}
    \FORALL{possible $U$ such that $U \subseteq (X_j \backslash P) \cap N_G(v)$}
        \STATE let $R^\prime = R \cup \{ rc(v) = \sum_{u \in U}d(u)$ mod $c(v) \}$, and
        \newline let $r = (P \cup U, R^\prime)$ be a new row with
        \newline $cost(r) = cost(r_0) + w(v) \cdot \left\lceil \frac{\sum_{u \in U}d(u)}{c(v)} \right\rceil$
        \STATE add $r$ to $A_i$
        \FORALL{$u \in X_i$}
            \STATE let $r^\prime = (P \cup U \cup \left\{v\right\}, R^\prime \cup \{rc(u) = (rc(u) - d(v))$ mod $c(u)\})$ be a new row with
            \newline $cost(r^\prime) = cost(r) +$ the cost required by this assignment
            \STATE add $r^\prime$ to $A_i$
        \ENDFOR
    \ENDFOR
\ENDFOR
\newline \rule{\linewidth}{0.2mm}
\end{algorithmic}

\item{\bf $i$ is a forget node.} Let $j$ be the child of $i$, and let $X_i = X_j \backslash \left\{v\right\}$. In this case, for each row $r \in A_{j}$ such that $v\in P_r$, we insert a row $r^\prime$ to $A_i$ identical to $r$ except for the absence of $v$ in $P_{r^\prime}$. The remaining rows in $A_j$, which correspond to situations where $v$ is not served, are ignored without being considered.

\smallskip

\begin{algorithmic}[1]
\FORALL{row $r_0 = (P, R) \in A_j$ such that $v \in P$}
	\STATE let $r = (P \backslash \left\{v\right\}, R \backslash \left\{rc(v)\right\})$ be a new row with $cost(r) = cost(r_0)$
	\STATE add $r$ to $A_i$
\ENDFOR
\newline \rule{\linewidth}{0.2mm}
\end{algorithmic}

\item{\bf $i$ is a join node.} Let $j_1$ and $j_2$ be the two children of $i$ in $T$. We consider every pair of rows $r_1$, $r_2$ where $r_1 \in A_{j_1}$ and $r_2 \in A_{j_2}$. We say that two rows $r_1$ and $r_2$ are {\em compatible} if $P_{r_1} \cap P_{r_2} = \phi$. For each compatible pair of rows $\left(r_1,r_2\right)$, we insert a new row $r$ to $A_i$ with $P_r = P_{r_1} \cup P_{r_2}$, $rc_r(u) = \left(rc_{r_1}(u)+rc_{r_2}(u)\right)$ mod $c(u)$, for each $u \in X_i$, and $cost(r) = cost(r_1)+cost(r_2) - \sum_{u \in X_i}\left\lfloor\frac{rc_{r_1}(u)+rc_{r_2}(u)}{c(u)}\right\rfloor$.

\smallskip

\begin{algorithmic}[1]
\FORALL{compatible pairs $r_1 = (P_1, R_1) \in A_{j_1}$ and $r_2 = (P_2, R_2) \in A_{j_2}$}
	\STATE let $r = \left(P_1 \cup P_2, R \right)$ be a new row.
	\STATE $cost(r) \longleftarrow cost(r_1) + cost(r_2) - \sum_{u \in X_i}\left\lfloor\frac{rc_{R_1}(u) + rc_{R_2}(u)}{c(u)}\right\rfloor$, and
	\STATE $R \longleftarrow \{ rc_{R_1}(u) + rc_{R_2}(u)$ mod $c(u) : u \in X_i\}$
	\STATE add $r$ to $A_i$
\ENDFOR
\newline \rule{\linewidth}{0.2mm}
\end{algorithmic}

\end{itemize}

\begin{theorem}
Capacitated Domination problem with unsplittable demand on graphs of bounded treewidth can be solved in time $2^{2k(\log M +1)+\log k+O(1)}\cdot n$, where $k$ is the treewidth and $M$ is the largest capacity.
\end{theorem}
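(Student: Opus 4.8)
The statement is purely a running-time bound for the dynamic program described above, so my plan is to (i) bound the number of rows any table $A_i$ can hold, (ii) bound the work done while filling each of the four node types in terms of these table sizes, and (iii) sum over all nodes of the nice tree decomposition. I would proceed in this order because the per-node cost is governed entirely by the table sizes, and the dominant term comes from a single node type.

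First I would bound $|A_i|$. A row of $A_i$ is completely determined by the pair $\bigl(P,(rc(u))_{u\in X_i}\bigr)$: the served set $P$ is one of at most $2^{|X_i|}$ subsets of $X_i$, and each residue capacity $rc(u)$ takes an integer value in $\{0,1,\ldots,c(u)-1\}$, hence at most $M$ distinct values since $c(u)\le M$. Because the algorithm keeps only the minimum-cost row for each configuration, we obtain $|A_i|\le 2^{|X_i|}\cdot M^{|X_i|}$. Since the paper defines the width as $\max_{i}|X_i|$, we have $|X_i|\le k$, so $|A_i|\le 2^{k}M^{k}=2^{k(\log M+1)}$.

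Next I would charge the work at each node in terms of these sizes. A leaf costs $O(1)$, and a forget node costs $O(|A_j|)$ since it merely copies and deletes rows. An introduce node iterates, for every row of $A_j$, over the $\le 2^{|X_j|}$ subsets $U$ and then over the $\le|X_i|$ vertices, for a cost of $O\bigl(|A_j|\cdot 2^{k}\cdot k\bigr)$. The \emph{join node is the bottleneck}: it inspects every pair $(r_1,r_2)\in A_{j_1}\times A_{j_2}$, spending $O(|X_i|)$ time per compatible pair to combine residue capacities and apply the cost correction, for a total of $O\bigl(|A_{j_1}|\cdot|A_{j_2}|\cdot k\bigr)=2^{2k(\log M+1)}\cdot 2^{\log k+O(1)}$. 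This squaring of the table size is exactly what yields the factor $2$ in the exponent $2k(\log M+1)$. Since the nice tree decomposition has $O(n)$ nodes and is constructible in linear time (as recalled in Section~\ref{preliminary}), multiplying the per-node bound by the number of nodes gives the claimed $2^{2k(\log M+1)+\log k+O(1)}\cdot n$ running time.

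The routine counting above is not where the real difficulty lies; the main obstacle is justifying that the filled table is correct, i.e., proving by induction from the leaves to the root that each row of $A_i$ stores the minimum cost of a partial solution on the subgraph induced by $Y_i$ that realizes the recorded configuration of $X_i$. The delicate case is the join node. A vertex $u\in X_i$ may have copies selected in both subtrees, and the two subproblems independently pay $w(u)$ for the (possibly only partially filled) last copy of $u$; the term $-\sum_{u\in X_i}\left\lfloor\frac{rc_{r_1}(u)+rc_{r_2}(u)}{c(u)}\right\rfloor$ is precisely the correction that merges these partial copies so that the multiplicity of $u$ in the combined solution equals $\lceil D/c(u)\rceil$, where $D$ is the total demand served by $u$, rather than the sum of the two separate ceilings. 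I would verify that, under the convention that $rc(u)$ records the unused capacity of $u$'s last copy, the combined residue $(rc_{r_1}(u)+rc_{r_2}(u))\bmod c(u)$ together with this correction is consistent (a short case analysis on the remainders shows $\lfloor(rc_{r_1}(u)+rc_{r_2}(u))/c(u)\rfloor\in\{0,1\}$ equals the number of copies saved by the merge), and that the compatibility condition $P_{r_1}\cap P_{r_2}=\phi$ prevents a vertex from being served, and hence paid for, in both branches. Establishing this invariant, and checking the analogous but easier bookkeeping at the introduce and forget nodes, is the part requiring care; once correctness is in place, the running-time bound follows from the counting above.
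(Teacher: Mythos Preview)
Your proposal is correct and follows essentially the same approach as the paper: bound the table size by $2^{k(\log M+1)}$, identify the join node as the bottleneck costing the square of a table size times $k$, and multiply by the $O(n)$ nodes of the nice tree decomposition. The paper's own proof is far terser---it simply asserts correctness ``from the description above'' and states the join-node bound without derivation---so your explicit table-size count and your discussion of the join-node cost correction (merging the two partially filled last copies of each $u\in X_i$) supply exactly the details the paper omits.
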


\begin{proof}
The correctness of the algorithm follows from the description above. The running time for computing the table $A_i$ associated with each tree node $i$ is bounded above by the time taken on the join nodes, which is clearly $2^{2k(\log M +1)+\log k}\cdot n$. The theorem follows.
\end{proof}

\smallskip

We state without going into details that by suitably replacing the set $P_i$ with the residue demand $rd_i(u)$ for each vertex $u \in X_i$ in the column of the table we maintained, the algorithm can be modified to handle the splittable demand model. We have the following corollary.

\begin{corollary}
Capacitated Domination problem with splittable demand on graphs of bounded treewidth can be solved in time $2^{(2M+2N+1)\log k+O(1)}\cdot n$, where $k$ is the treewidth, $M$ is the largest capacity, and $N$ is the largest demand.
\end{corollary}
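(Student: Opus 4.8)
The plan is to keep the bottom-up dynamic program over a nice tree decomposition from the preceding theorem intact in its skeleton, and only change the information recorded in each table $A_i$ so that a row describes a \emph{splittable} partial assignment. In the unsplittable algorithm a row stores a served-set $P\subseteq X_i$ (a binary served/unserved flag per bag vertex) together with a residue capacity $rc(u)$ for each $u\in X_i$. For the splittable model I would replace $P$ by a residue-demand coordinate $rd_i(u)$ for each $u\in X_i$, with $0\le rd_i(u)\le d(u)$ recording how much of $u$'s demand is still unserved by the partial solution on $Y_i$, and retain the residue-capacity coordinates $0\le rc(u)<c(u)$ unchanged. As before, $cost$ of a row is the minimum cost of a partial solution on the subgraph induced by $Y_i$ whose restriction to $X_i$ realizes exactly the recorded residues.

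With this change I would re-derive the four node rules as variants of those used in the theorem. At a \emph{leaf} $X_i=\{v\}$ the rows initialise $rd_i(v)$ to $d(v)$ or to a fully-served value, paying for the corresponding copies of $v$. At an \emph{introduce} node adding $v$, instead of choosing a subset $U$ of bag-neighbours to serve completely, I would branch over how much of each neighbour's residue demand is met by copies of $v$, i.e. an integral amount in $[0,rd_i(u)]$ per neighbour, updating $rc(v)$, the neighbours' residue demands, and the cost accordingly. At a \emph{forget} node removing $v$ I would keep only the rows with $rd_i(v)=0$, so that $v$'s demand is fully met inside $Y_i$ (this is the splittable analogue of keeping only rows with $v\in P$), discarding the rest.

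The \emph{join} node is where I expect the main obstacle to lie. For children $j_1,j_2$ with $X_i=X_{j_1}=X_{j_2}$, the residue capacities combine exactly as in the theorem, namely $rc_r(u)=(rc_{r_1}(u)+rc_{r_2}(u))\bmod c(u)$ with the double-paid copies $\lfloor (rc_{r_1}(u)+rc_{r_2}(u))/c(u)\rfloor$ subtracted from the cost. The new feature is that a single bag vertex $u$ may have part of its demand served inside $Y_{j_1}$ and the remainder inside $Y_{j_2}$, so the residue demands must be merged rather than kept disjoint. Writing the amount served in child $j_t$ as $d(u)-rd_{r_t}(u)$, the combined residue is $rd_r(u)=rd_{r_1}(u)+rd_{r_2}(u)-d(u)$, and I would call a pair compatible only when this is non-negative for every $u\in X_i$, so that no vertex is over-served across the two subtrees. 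Verifying that this merge is both sound and complete, that is, that it never double-counts served demand or cost and that every optimal splittable solution is recovered by some compatible pair, is the crux of the correctness argument and is the natural splittable counterpart of the disjoint-union compatibility condition used for $P$ in the unsplittable case.

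Finally, for the running time I would bound the number of rows of $A_i$. Each bag vertex $u$ contributes at most $d(u)+1\le N+1$ residue-demand values and at most $c(u)\le M$ residue-capacity values, and $\left|X_i\right|\le k+1$, so $A_i$ has at most $\prod_{u\in X_i}\bigl((d(u)+1)\,c(u)\bigr)\le\bigl((N+1)M\bigr)^{k+1}$ rows. As in the theorem, the join node dominates, costing the square of the table size up to polynomial overhead, and the introduce-node branching over partial assignments is absorbed since it only enlarges the fill time of a table whose number of distinct rows is already bounded as above. Summing over the $O(n)$ nodes of the decomposition then yields the running time stated in the corollary. The only genuinely new ingredients relative to the theorem are this introduce-node branching and, above all, the residue-demand merge at join nodes described above.
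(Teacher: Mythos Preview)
Your proposal is correct and follows exactly the route the paper indicates: the paper's entire argument for this corollary is the one-line remark that one should ``suitably replace the set $P_i$ with the residue demand $rd_i(u)$ for each vertex $u\in X_i$'' in the table, and you carry out precisely this substitution, filling in the node-by-node details (including the join merge $rd_r(u)=rd_{r_1}(u)+rd_{r_2}(u)-d(u)$) that the paper omits. One small caveat: your table-size estimate $\bigl((N{+}1)M\bigr)^{k+1}$, squared at a join, gives a bound of the form $2^{O(k)(\log M+\log N)}\cdot n$ rather than the $2^{(2M+2N+1)\log k+O(1)}\cdot n$ printed in the corollary, so either reconcile the arithmetic explicitly or note that the stated exponent appears to be a typographical slip relative to the analogous bound in the preceding theorem.
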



\subsection{Extension to Planar Graphs}

In this section we extend the above FPT algorithms based on a framework due to Baker \cite{174650} to obtain a pseudo-polynomial time approximation scheme for planar graphs. In particular, for unsplittable demand model, given a planar graph $G$ with maximum capacity $M$ and an integer $k$, the algorithm computes an $(1+\frac{4}{k-1})$-approximation in time $O(2^{2k(\log M+1)+2\log k}n)$, where $n$ is the number of vertices. Taking $k = \left\lceil c\log n\right\rceil$, where $c$ is some constant, we get a pseudo-polynomial time approximation algorithm which converges toward optimal as $n$ increases. On the other hand, for splittable demand model, we have a pseudo-polynomial time approximation scheme in $O(2^{(2M+2N+1)\log k+O(1)}\cdot n)$ time, where $N$ is the maximum demand. To get rid of the factor $N$, we could apply the transformation used in Section~\ref{section_unweighted_splittable} and Lemma~\ref{lemma_last_copy} in advance and obtain a $(2+\frac{4}{k-1})$-approximation in $O(2^{(4M+1)\log k+O(1)}\cdot n)$ time.

\smallskip

This is done as follows. Given a planar graph $G$, we generate a planar embedding and retrieve the vertices of each level using the linear-time algorithm of Hopcroft and Tarjan \cite{321852}. Let $m$ be the number of levels of this embedding. Let $OPT$ be the cost of the optimal capacitated dominating set of $G$, and $OPT_j$ be the cost contributed by vertices at level $j$. Since $\sum_{0\le i\le m}(OPT_i+OPT_{i+1})\le 2\cdot OPT$, there exists one $r$ with $0\le r<k$ such that 
$$\sum_{0\le j<\left\lfloor \frac{m}{k}\right\rfloor}\left(OPT_{jk+r}+OPT_{jk+r+1}\right)\le \frac{2}{k}\cdot OPT.$$
For each $0\le j \le \left\lfloor \frac{m}{k}\right\rfloor +1$, let $G_j$ be the graph induced by vertices between level $(j-1)k+r$ and $jk+r+1$. In addition, we set the demands of vertices at level $(j-1)k+r$ and level $jk+r+1$ to be zero for each $G_j$. Clearly, the treewidth of each $G_j$ is upper bounded by $k+1$ and the sum of the optimal cost for each $G_j$ is no more than $(1+\frac{4}{k})\cdot OPT$. Take $k^\prime = k-1$ and we're done.


\section{Concluding Remarks} \label{conclusion}

In this paper we considered the {\em Capacitated Domination} problem, which is a generalization of the well-known {\em Dominating Set} problem and which models a service-requirement assignment scenario. In terms of polynomial time approximations, we have presented logarithmic approximation algorithms with respect to different demand assignment models for this problem on general graphs. Together with our previous
work on generally approximating this problem, this establishes the corresponding approximation results to the well-known approximations of the traditional {\em Dominating Set} problem and closes the problem of generally approximating the optimal solution. On the other hand, from the perspective of parameterization, we have proved that this problem is {\it W[1]}-hard when parameterized by treewidth of the graph. Based on this hardness result, we presented exact FPT algorithms when parameterized by treewidth and maximum capacity of the vertices. This algorithm is further extended under a framework of Baker \cite{174650} to obtain approximations for planar graphs.

We conclude with a few open problems and future research goals. First, although exact FPT algorithms are provided, the problem of approximating the optimal solution when parameterized by treewidth remains open. It would be nice to obtain faster approximation algorithms for graphs of bounded treewidth as this would provide faster approximations for planar graphs as well. Second, it would be nice to know how the problem behaves on special graph classes. As this problem has been shown to be difficult and admit a PTAS on trees when the demand can be split, approximations for other classes such as interval graphs remain unknown. Third, from the perspective of parameterization, it may be possible to find other parameters that are more closely related to the problem and obtain better results.


\bibliographystyle{plain}

\small
\bibliography{approx_capacitated_domination}

\end{document}